\newcommand\pgfmathsinandcos[3]{%
  \pgfmathsetmacro#1{sin(#3)}%
  \pgfmathsetmacro#2{cos(#3)}%
}
\newcommand\LongitudePlane[3][current plane]{%
  \pgfmathsinandcos\sinEl\cosEl{#2} 
  \pgfmathsinandcos\sint\cost{#3} 
  \tikzset{#1/.estyle={cm={\cost,\sint*\sinEl,0,\cosEl,(0,0)}}}
}
\newcommand\LatitudePlane[3][current plane]{%
  \pgfmathsinandcos\sinEl\cosEl{#2} 
  \pgfmathsinandcos\sint\cost{#3} 
  \pgfmathsetmacro\yshift{\cosEl*\sint}
  \tikzset{#1/.estyle={cm={\cost,0,0,\cost*\sinEl,(0,\yshift)}}} %
}
\newcommand\DrawLongitudeCircle[2][1]{
  \LongitudePlane{\angEl}{#2}
  \tikzset{current plane/.prefix style={scale=#1}}
  \pgfmathsetmacro\angVis{atan(sin(#2)*cos(\angEl)/sin(\angEl))} %
  \draw[current plane] (\angVis:1) arc (\angVis:\angVis+180:1);
  \draw[current plane,dashed] (\angVis-180:1) arc (\angVis-180:\angVis:1);
}
\newcommand\DrawLatitudeCircle[2][1]{
  \LatitudePlane{\angEl}{#2}
  \tikzset{current plane/.prefix style={scale=#1}}
  \pgfmathsetmacro\sinVis{sin(#2)/cos(#2)*sin(\angEl)/cos(\angEl)}
  \pgfmathsetmacro\angVis{asin(min(1,max(\sinVis,-1)))}
  \draw[current plane] (\angVis:1) arc (\angVis:-\angVis-180:1);
  \draw[current plane,dashed] (180-\angVis:1) arc (180-\angVis:\angVis:1);
}
\tikzset{%
  >=latex, 
  inner sep=0pt,%
  outer sep=2pt,%
  mark coordinate/.style={inner sep=0pt,outer sep=0pt,minimum size=3pt,
    fill=black,circle}%
}
\def\R{1.5} 
\def\angEl{35} 
\def\angAz{-105} 
\def\angPhi{-40} 
\def\ellipsoidInBall{
\def\R{2.5}
\begin{tikzpicture} 

\begin{scope}[transform canvas={rotate=-20}]
  \shade[shading=ball, ball color = gray!15] (0,0) ellipse (0.5*2.5 and 0.3*2.5);

\draw (0,0) ellipse (0.5*2.5 and 0.1*2.5);
\draw (0,0) ellipse (0.1*2.5 and 0.3*2.5);

\draw[->] (-0.5*\R,0) -- (0.5*\R,0) node[above] {$\eta_1$};
\draw[->] (0,-0.3*\R) -- (0,0.3*\R) node[above] {$\eta_2$};
\draw[->] (-0.1*\R,-0.1*\R) -- (0.1*\R,0.1*\R) node[above] {$\eta_3$};

\end{scope}

\pgfmathsetmacro\H{\R*cos(\angEl)} 
\tikzset{xyplane/.estyle={cm={cos(\angAz),sin(\angAz)*sin(\angEl),-sin(\angAz),
                              cos(\angAz)*sin(\angEl),(0,-\H)}}}
\LongitudePlane[xzplane]{\angEl}{\angAz}
\LongitudePlane[pzplane]{\angEl}{\angPhi}
\LatitudePlane[equator]{\angEl}{0}
\draw (0,0) circle (\R);
\DrawLatitudeCircle[\R]{0} 
\DrawLongitudeCircle[\R]{\angAz} 
\DrawLongitudeCircle[\R]{\angAz+90} 

\end{tikzpicture}
}
\newcommand{\1}{\mathbbm{1}} 
\newtheorem{proposition}{Proposition}
\newtheorem{lemma}{Lemma}
\newtheorem{theorem}{Theorem}
\newtheorem{corollary}[proposition]{Corollary}
\renewcommand{\c}[1]{\mathcal{#1}}
\newcommand{\idty}{\1}
\DeclareMathOperator*{\map}{map}
\DeclareMathOperator*{\rec}{rec}
\newcommand{\<}{\langle}
\renewcommand{\>}{\rangle}
\newcommand{\scalar}[2]{\langle #1 | #2 \rangle}
\newcommand{\ketbra}[2]{| #1 \rangle \langle #2 |}
\newcommand{\ket}[1]{| #1 \rangle}
\newcommand{\bra}[1]{\langle #1 |}
\newcommand{\tr}{\mathrm{Tr}}
\newcommand{\stackidx}[4]{
\substack{
#1 #2 \\
#3 #4}
}
\begin{document}
\title{Entropic trade--off relations for quantum operations}

\author{Wojciech Roga$^{1,2}$, Zbigniew~Pucha{\l}a$^3$, {\L}ukasz Rudnicki$^4$, Karol {\.Z}yczkowski$^{2,4}$}

\affiliation{
$^1$Universit\`a degli Studi di Salerno, Via Ponte don Melillo, I-84084 Fisciano (SA), Italy \\
$^2$Institute of Physics, Jagiellonian University, ul.\ Reymonta 4, 30-059 Krak\'ow, Poland\\
$^3$Institute of Theoretical and Applied Informatics, Polish Academy
of Sciences, Ba{\l}tycka 5, 44-100 Gliwice, Poland  \\
$^4$Center for Theoretical Physics, Polish Academy of Sciences,
al.\ Lotnik\'ow 32/46, 02-668 Warszawa, Poland}

\date{05--02--2013, ver. 19.2}

\begin{abstract}
Spectral properties of an arbitrary matrix can be characterized by the entropy
of its rescaled singular values. Any quantum operation can be described by the
associated dynamical matrix or by the corresponding superoperator. The entropy
of the dynamical matrix describes  the degree of decoherence introduced by the
map, while the entropy of the superoperator characterizes the a
priori knowledge of the receiver  of the outcome of a quantum channel $\Phi$.
We prove that for any map acting on  a $N$--dimensional quantum system the sum of
both entropies is not smaller than $\ln N$. For any bistochastic map this lower
bound reads $2\ln N$. We investigate also the corresponding R{\'e}nyi entropies,
providing an upper bound for their sum and analyze entanglement
of the bi-partite quantum state associated with the channel.
\end{abstract}

\pacs{03.67.Hk 02.10.Ud 03.65.Aa}
\maketitle

\section{Introduction}

From the early days  of quantum mechanics the uncertainty principle was one 
of its the most significant features, 
as it shows in what respect the quantum theory differs from its
classical counterpart.
It was manifested that
the variances of the two noncommuting observables cannot be simultaneously arbitrarily small.
Therefore, if we prepare the quantum state as an eigenstate of one observable
we get the perfect knowledge about the corresponding physical quantity, however, we loose 
ability to predict the effect of measurement of the second, noncommuting observable. 
Preparing a state one shall always consider some trade--off 
regarding the observables which will be specified in the experiment.
Limits for such a trade--off  have been formulated as different uncertainty relations 
 \cite{heisenberg,robertson,BBM75,De83,maassen}. However, we shall point out that
 these uncertainty relations are not necessarily related to noncommuting observables,
 but may describe the trade--off originating from different descriptions 
of the same quantum state.  
As an example consider the entropic uncertainty relation \cite{BBM75} derived
for two probability distributions related to the same quantum state in position and momentum representations. 

The original formulation of Heisenberg \cite{heisenberg} of the uncertainty principle
concerns the product of variances of two non-commuting observables.
Assume that a physical system is described by a quantum state $|\psi\>$, and
several copies of this state are available.  Measuring an observable $A$ in each
copy of this state results with the standard deviation $\Delta_{\psi} A
=\sqrt{\<\psi|A^2|\psi\>-\<\psi|A|\psi\>^2}$,  while $\Delta_{\psi} B$ denotes
an analogous expression for another observable $B$. According to the approach of
Robertson \cite{robertson} the product of these deviations is bounded from
below,
\begin{equation}
\Delta_{\psi}A\ \Delta_{\psi} B\geq\frac{1}{2}\left|\<\psi|[A,B]|\psi\>\right|,
\end{equation}
where $[A,B]=AB-BA$ denotes the commutator.  If the operators $A$ and $B$ do not
commute it is thus impossible to specify simultaneously precise values of both
observables.

Uncertainty relations can also be formulated for other quantities characterizing
the distributions of the measurement outcomes. One possible choice is to use
entropy which leads to entropic uncertainty relations of  Bia{\l}ynicki--Birula
and Mycielski \cite{BBM75}. This formulation can be considered as a
generalization of the standard approach as it implies the relations of
Heisenberg.

In the case of a finite dimensional Hilbert space the uncertainty relation can
be formulated in terms of the Shannon entropy.  Consider a non-degenerate
observable $A$, the eigenstates $|a_i\rangle$ of which determine an orthonormal
basis.  The probability that this observable measured in a pure state
$|\psi\rangle$ gives the$i$th outcome reads $a_i=|\<a_i|\psi\>|^2$. The
non-negative numbers $a_i$ satisfy $\sum_{i=1}^N a_i=1$,  so this distribution
can be characterized by the  Shannon entropy, $H(A)=-\sum_i a_i\ln a_i$. Let
$H(B)$ denotes the Shannon entropy  corresponding the the probability vector
$b_i=|\<b_i|\psi\>|^2$ associated with an observable $B$. If both observables do
not commute the sum of both entropies is bounded from below, as shown by Deutsch
\cite{De83}.  His result was improved by Maassen and Uffink \cite{maassen},  who
proved that
\begin{equation}
\label{masuuf}
H(A) + H(B) \ \geq \ -2\ln{c},
\end{equation}
where $c^2=\max_{j,k}|\<a_j|b_k\>|^2$ denotes the maximal overlap between the
eigenstates of both observables. Note that this bound depends solely on the
choice of the observables  and not on the state $|\psi\rangle$. Recent reviews
on entropic uncertainty relations can be found in \cite{WW10,BR11}, while a link
to stabilizer formalism was discussed in \cite{NKG12}. Certain generalizations
of uncertainty relations for more than two spaces can be based on the strong
subadditivity of entropy \cite{FL12}.

Relation (\ref{masuuf}) describes a bound for the information which can be
obtained in two non-complementary projective measurements.
Entropic uncertainty relations formulated for a pair of arbitrary measurements,
described by positive operator valued measures (POVM),
were obtained by Krishna and Parthasarathy \cite{KP02}.
A more general class of inequalities was derived later
by Rastegin \cite{Ra10,Ra11}.
Related recent results \cite{BCCRR10,TR11,CYGG11,CCYZ12}
concerned sum of two conditional entropies
characterizing two quantum measurements
described in terms of their POVM operators.

The so called {\it collapse of wave function} during the measurement 
is often considerd as another characteristic feature of quantum mechanics.
This postulate of 
quantum theory implies that the measurement disturbs the quantum state 
subjected to the process of quantum measurement.
In general, describing a quantum operation performed on an arbitrary state
one can consider a kind of trade--off relations between the efficiency of the measurement 
and the  disturbance introduced to the  measured states.

Even though the trade--off relations were investigated from the beginnings of quantum mechanics,
this field became a subject of a  considerable scientific interest in the recent 
decade \cite{Fuchs2001,DAriano2003,Buscemi2009}.   
The notion of {\sl disturbance}
of a state introduced by Maccone \cite{Maccone2006,Maccone2007},
can be related to the average fidelity
between an initial state of the system and and the state after the measurement \cite{Maccone2006,Buscemi2009}.
Another version of disturbance can be defined as a difference between the initial entropy of a quantum state
and the coherent information between the system and the measuring apparatus  \cite{Maccone2007}. 

In this work we will investigate
a single measurement process  described by a quantum operation:  a complete
positive, trace preserving linear map which acts on an input state of size $N$.
We attempt to compare the information loss introduced by the map 
(disturbance) with the
information the receiver knows about the outgoing state before the measurement
(the information gained by the apparatus).
The former quantity  can be characterized \cite{ZB04} by the entropy of a map
$S^{\map}(\Phi)$, equal to the von Neumann entropy of the quantum
state which corresponds  to the considered map by the Jamio\l{}kowski isomorphism
\cite{jamiolkowski,BZ06}. The latter quantity  will be described
by the \emph{singular quantum entropy} $S^{\rec}(\Phi)$
of Jumarie \cite{Ju00}, given by the
Shannon entropy of the normalized vector of singular values of the superoperator
matrix. We are going to show that the sum of these two  entropies is bounded
from below by $\ln N$.
Note that our approach concerns a given quantum map $\Phi$,
but it does not depend on the particular choice of the
Kraus operators (or POVM operators) used to represent the
quantum operation. We also derive an upper bound for the sum of entropies
$S^{\map}(\Phi)$ and $S^{\rec}(\Phi)$ and analyze entanglement properties
of the corresponding Jamio\l{}kowski-Choi state.

Our paper is organized as follows. In Section II we review basic concepts on
quantum maps, define entropies investigated and present a connection with
trade--off 
relations for quantum measurements. A motivation for our study
stems from investigations of the one--qubit maps presented in Section III.
General entropic inequalities for arbitrarily reordered matrices
are formulated in Section IV.
Main results of the work are contained in Section~V, in
which the  trade--off 
relations for quantum channels are derived
and entanglement of the corresponding states is analyzed.
Discussion of some other properties of the dynamical matrix
and a bound for the entropy of a map are
relegated to Appendices.

\section{Quantum operations and entropy}\label{secII}

A quantum state is described by a density matrix -- a~Hermitian, positive
semi-definite matrix of trace one. A~density matrix of dimension $N$ represents
the operator acting on $\c H_N$. The set of density matrices of dimension $N$ is
denoted as:
\begin{equation}
\c M_N=\{\rho: \rho=\rho^\dagger, \rho\geq0, \tr\rho=1\}.
\end{equation}
A \emph{quantum operation}  $\Phi$, also called a \emph{quantum channel}, is
defined as a completely positive (CP) and trace preserving (TP) quantum map which
acts on the set of density matrices:
\begin{equation}
\Phi:\c M_N\rightarrow\c M_N.
\end{equation}
Complete positivity means that any extended map acting on an enlarged quantum
system
\begin{equation}
\Phi\otimes{\1}_d:\c M_{Nd}\rightarrow\c M_{Nd}
\end{equation}
transforms positive matrices into positive matrices for any extension of
dimension $d$. Due to the Choi theorem, see e.g. \cite{BZ06}, to verify whether a
given quantum map $\Phi^A$ acting on a quantum $N$--level system $A$ is
completely positive it is necessary and sufficient that the following operator
on the Hilbert space $\c H_{N}^A\otimes\c H_{N}^B$  of a composed subsystems $A$
and $B$
\begin{equation}
D_{\Phi^A}:= N(\Phi^A\otimes\mathbbm{1}^{B})\big(|\phi_+^{AB}\left.\right\rangle\left\langle\right. \phi_+^{AB}|\big)\geq 0,
\label{choi}
\end{equation}
is non-negative. Here
$|\phi_+^{AB}\left.\right\rangle=\frac{1}{\sqrt{N}}\sum_{i=1}^N
|i^A\left.\right\rangle\otimes|i^B\left.\right\rangle\in \c H_N^A\otimes\c
H_N^B$ denotes the maximally entangled state in the extended space. The above
relation, called the \emph{Jamio{\l}kowski isomorphism}, implies a
correspondence between quantum maps $\Phi$ and quantum states
$\omega_{\Phi}=\frac{1}{N}D_{\Phi}$. The operator $\omega_\Phi$  is called the
Jamio{\l}kowski--Choi state, whereas the matrix $D_{\Phi}$ is called the
\emph{dynamical matrix} associated with the map $\Phi$.

Any quantum channel acting on a quantum system $A$ can be represented by a
unitary transformation $U^{AB}$ acting on an enlarged system and
 followed by the partial trace over the ancillary subsystem $B$: 
\begin{equation}
\Phi(\rho^A)={\rm Tr}_B\left[U^{AB}(\rho^A\otimes \ket{1^B}\bra{1^B})(U^{AB})^{\dagger}\right].
\label{repenv}
\end{equation}
This formula is called the environmental representation 
of a quantum channel.
Another useful representation of a quantum channel
is given by a set of operators $K_i$ satisfying an  
identity resolution, $\sum_iK_i^{\dagger}K_i=\idty$,
which implies the trace preserving property. 
The Kraus operators $K_i$  define the {\sl Kraus representation}
 of the map  $\Phi$,
\begin{equation}
\Phi(\rho)=\sum_iK_i\rho K_i^{\dagger} .
\label{repkraus}
\end{equation}

Since $\Phi:\rho\rightarrow\rho'$ acts on an operator $\rho$, it is sometimes
called a \emph{superoperator}. If we reshape a density matrix into a vector of
its entries $\vec{\rho}$, the  superoperator $\Phi$ is represented by a matrix
of size $N^2$. It is often convenient to write the discrete dynamics
$\vec{\rho'}=\Phi \vec{\rho}$ using the four--index notation
\begin{equation}
\label{super}
	\rho'_{k\ \!\!l}=\Phi_{\!\!\stackidx{k}{l}{m}{n}}
	\rho_{m\ \!\!n},
	\end{equation}
where the sum over repeating indices is implied and
\begin{equation}
\Phi_{\!\!\stackidx{k}{l}{m}{n}}
=\<k\ l|\Phi|m\ n\> .
\end{equation}
The dynamical matrix $D=D_{\Phi}$
is related to the superoperator matrix $\Phi$
by reshuffling its entries,
\begin{equation}
D_{\stackidx{k}{m}{l}{n}}
=\Phi_{\!\!\stackidx{k}{l}{m}{n}}
\label{reshuffling}
\end{equation}
written $D_{\Phi}=\Phi^R$ or $\Phi= D_{\Phi}^R$.

The von Neumann entropy of the Jamio{\l}kowski-Choi state was studied in
\cite{verstaete,roga0,ziman,RFZ11,Ra12} and it is also investigated in this work. We
are going to compare the spectral properties of the Jamio{\l}kowski-Choi state
$\omega_{\Phi}$ and the spectral properties of the corresponding superoperator
matrix $\Phi$.

\subsection{Entropy of a map}

Entropy $S^{\map}(\Phi)$ is defined \cite{ZB04} as the von
Neumann entropy of the corresponding Jamio{\l}kowski-Choi state
$\omega_{\Phi}=\frac{1}{N}D_\Phi$,
\begin{equation}
\label{smapphi}
S^{\map}(\Phi)\; :=\; -\tr \omega_{\Phi} \ln \omega_{\Phi}.
\end{equation}
This quantity can be interpreted as the special case of the \emph{ exchange entropy}
\cite{szuma}
\begin{equation}
S^{\,\rm exchange}\left(\Phi^A,\rho^A\right)\equiv S\left(\Phi^A\otimes\idty^B(|\phi^{AB}_{\rho^A}\>
\<\phi^{AB}_{\rho^A}|)\right),
\label{exchange}
\end{equation}
where $|\phi^{AB}_{\rho^A}\>\in\c H_{N_A}^A\otimes\c H_{N_B}^B$ is a
purification of $\rho^A$, that is such a pure state of an enlarged system which has the partial trace 
given by ${\rm Tr}_B|\phi^{AB}_{\rho^A}\>\<\phi^{AB}_{\rho^A}|=\rho^A$.
The exchange entropy characterizes the information exchanged during a quantum
operation between a principal quantum system $A$ and an environment $B$, assumed
to be initially in a pure state. Under the condition that an initial state of the
quantum system $A$ is maximally mixed, $\rho^A_*=\frac{1}{N}\idty$, the exchange
entropy $S^{\,\rm exchange}\left(\Phi^A,\rho^A_*\right)$ is equal to the entropy
of a channel $S^{\map}\left(\Phi^A\right)$.

We will treat the entropy of a map as a measure of disturbance caused by a measurement performed on the quantum system.
The work \cite{Maccone2007} contains a list of the properties expected from a good measure of disturbance. 
Among them there is the requirement that the disturbance measure should be equal to zero if and only if
the measuring process is invertible. For unitary transformations of the quantum state
the dynamical matrix  given in Eq. (\ref{choi}) has rank one, 
so  the related entropy of the map is  equal to zero as expected.
Moreover, if the map preserves identity,
the entropy of a map is equivalent to the 
state independent disturbance analyzed in \cite{Maccone2007}.

It is useful to generalize the von Neumann entropy and to
introduce the family of the  R{\'e}nyi entropies
\begin{equation}
\label{ren1}
S_q(\rho)=\frac{1}{1-q}\ln\tr\rho^{q},
\end{equation}
as they allow to formulate a more general
class of uncertainty relations \cite{BB06}.
Here  $q\geq 0$ is a free parameter
and in the limit $q\rightarrow 1$
the generalized entropy tends to the von Neumann entropy,
$S_q(\rho)\rightarrow S_1(\rho)\equiv S(\rho)$.
For any classical probability vector and any
quantum state  $\rho$ the R{\'e}nyi entropy $S_q(\rho)$
is a monotonously decreasing function of the
R{\'e}nyi parameter $q$ \cite{beck}.
The generalized entropy (\ref{smapphi})
of a quantum map $\Phi$, obtained by applying the above form of R{\'e}nyi
to the state $\omega_{\Phi}$ will be denoted by $S_q^{\map}(\Phi)$.

\subsubsection{Connection with the uncertainty principle for measurements}

Let $\Phi$ be a CP TP map with Kraus operators $\{A_i\}_i$, we define, $P_i =
A_i^\dagger A_i$ and note that the operators $\{P_i\}_i$ form a POVM~\cite{BZ06},
i.e. are  positive semidefinite and
\begin{equation}
\sum_i P_i = \1.
\end{equation}
If $\rho$ is a state of a given system, the probability of the outcome
associated with a measurement of the operator $P_i$ reads
\begin{equation}
p_i = \tr P_i \rho.
\end{equation}
The uncertainty involved in a described measurement can be quantified by the entropy~\cite{KP02}
\begin{equation}
H_q(P,\rho) = S_q(p).
\end{equation}
Let us consider an uncertainty involved in the measurement in the case when the state of a given
system is maximally mixed, i.e.
\begin{equation}
p_i = \tr \left(P_i \frac1N \1\right) = \frac1N \tr P_i.
\end{equation}
We have the following corollary
\begin{corollary}\label{cor:min-POVM}
If the state of a given system is maximally mixed, then
\begin{equation}
\min_{P} H_q\left(P,\frac1N \1 \right) = S_q^{\mathrm{map}} (\Phi),
\end{equation}
where the minimum is taken over all possible POVM's such that
\begin{equation}
P_i = A_i^\dagger A_i
\end{equation}
and $A_i$ are the Kraus operators of the quantum channel $\Phi$.
\end{corollary}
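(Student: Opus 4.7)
The plan is to reduce the optimisation over POVMs coming from Kraus decompositions to a majorisation statement involving the spectrum of $\omega_\Phi$. First, I would specialise the probability vector to the maximally mixed input: if $\Phi(\rho)=\sum_i A_i\rho A_i^\dagger$ is any Kraus representation, then $P_i=A_i^\dagger A_i$ yields $p_i=\tr(P_i\1/N)=\tfrac{1}{N}\tr(A_i^\dagger A_i)$, which is nothing but the squared Hilbert--Schmidt norm of $A_i$ rescaled by $1/N$. The trace-preserving condition $\sum_i P_i=\1$ guarantees $\sum_i p_i=1$.

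Next, I would bring in the dynamical matrix. Via the Jamio{\l}kowski isomorphism, $\omega_\Phi=\tfrac{1}{N}\sum_i|A_i\rangle\langle A_i|$, where $|A_i\rangle$ denotes the vectorisation of $A_i$ in $\c H_N\otimes\c H_N$. The spectral decomposition $\omega_\Phi=\sum_k\lambda_k|v_k\rangle\langle v_k|$ identifies a canonical (``spectral'') Kraus representation $\{A_k^{(0)}\}$ obtained by unvectorising $\sqrt{N\lambda_k}\,|v_k\rangle$. These canonical Kraus operators are Hilbert--Schmidt orthogonal and satisfy $p_k^{(0)}=\tfrac{1}{N}\tr(A_k^{(0)\dagger}A_k^{(0)})=\lambda_k$. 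Hence for this particular POVM, $H_q(P^{(0)},\1/N)=S_q(\lambda)=S_q^{\map}(\Phi)$, showing the claimed value is attained.

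The heart of the argument is to show no other Kraus representation does better. I would invoke the standard Kraus unitary-freedom lemma: any two Kraus representations $\{A_k^{(0)}\}$ and $\{B_j\}$ of the same channel are related by $B_j=\sum_k u_{jk}A_k^{(0)}$ for some unitary $u$ (after padding the shorter family with zero operators). A direct computation using Hilbert--Schmidt orthogonality of the $A_k^{(0)}$ gives
\begin{equation}
q_j=\tfrac{1}{N}\tr(B_j^\dagger B_j)=\sum_k|u_{jk}|^2\lambda_k,
\end{equation}
so the vector $q$ is the image of $\lambda$ under the bistochastic matrix $(|u_{jk}|^2)$. By Hardy--Littlewood--P\'olya, $\lambda$ majorises $q$, i.e.\ $\lambda\succ q$.

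The conclusion then follows from Schur-concavity: the R\'enyi entropy $S_q$ is Schur-concave for every $q\geq 0$, so $\lambda\succ q$ implies $S_q(q)\geq S_q(\lambda)=S_q^{\map}(\Phi)$, with equality precisely when $q=\lambda$, realised by the canonical representation. The only subtle point I expect is the book-keeping when the two Kraus families have different cardinalities; this is handled cleanly by appending zero operators so that $u$ becomes genuinely unitary, after which the entries $|u_{jk}|^2$ form an honest doubly stochastic matrix and the majorisation step applies without modification.
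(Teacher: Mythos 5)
Your proof is correct, but it reaches the key majorization by a genuinely different route than the paper. The paper works directly with the decomposition $D_\Phi=\sum_i\kappa_i\ketbra{a_i}{a_i}$, $\kappa_i=\tr P_i$, and invokes the Ky Fan variational characterization of the sum of the $k$ largest eigenvalues to show $\sum_{i\leq k}\lambda_i(D_\Phi)\geq\sum_{i\leq k}\kappa_i$, i.e.\ $\kappa\prec\lambda(D_\Phi)$, then finishes with Schur-concavity exactly as you do. You instead fix the canonical (spectral) Kraus family, use the unitary freedom of Kraus representations to write any other family as $B_j=\sum_k u_{jk}A_k^{(0)}$, and obtain $q_j=\sum_k|u_{jk}|^2\lambda_k$, so that $q\prec\lambda$ by Hardy--Littlewood--P\'olya; this is in essence the Schur--Horn route to the same majorization, while the paper takes the Ky Fan route. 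Your padding remark is the right way to handle unequal cardinalities (the $m\times r$ matrix $u$ is only an isometry, with row sums of $|u_{jk}|^2$ possibly below one, until you extend $u$ to a unitary and pad $\lambda$ with zeros, which changes neither the majorization nor $S_q$). One genuine advantage of your version: the paper's proof only establishes the inequality $H_q(P,\tfrac{1}{N}\1)\geq S_q^{\map}(\Phi)$ and leaves attainment of the minimum implicit, whereas you exhibit the minimizer explicitly via the spectral Kraus representation, which is needed for the equality claimed in the corollary; the paper's approach, in turn, is more self-contained in that it does not require the unitary-freedom theorem, only the variational principle for Hermitian eigenvalues.
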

\begin{proof}
Let $A_i$ be a Kraus representation of the channel $\Phi$ and denote by $\ket{res(A_i)}$  a vector obtained from the matrix $A_i$ by putting its elements in the lexicographical order i.e. rows follow one after another. We introduce
\begin{equation}
\begin{split}
\kappa_i &= \scalar{res(A_i)}{res(A_i)} = \tr A_i^\dagger A_i = \tr P_i, \\
\ket{a_i} &= \frac{1}{\sqrt{\kappa_i}} \ket{res(A_i)}.
\end{split}
\end{equation}
Assume that we put the $\kappa_i$ coefficients in a decreasing order such that $\kappa_1$ has the largest value. We have~\cite{BZ06}
\begin{equation}
D_{\Phi} = \sum_{i=1}^{l} \ketbra{res(A_i)}{res(A_i)},
\end{equation}
where $l\leq N^2$. Using the variational characterization of eigenvalues~\cite{HJ}, for the
Hermitian matrix $D_{\Phi}$, we get for each $k\leq N^2$ the following expression for the sum of $k$ largest eigenvalues
\begin{eqnarray}
\sum_{i=1}^{k}\lambda_{i}\left(D_{\Phi}\right) & = & \max_{U_{k}}\textrm{Tr}\left(U_{k}^{\dagger}D_{\Phi}U_{k}\right)\\
 & = & \max_{U_{k}}\sum_{i=1}^{l}\kappa_{i}\textrm{Tr}\left(U_{k}^{\dagger}\left|a_{i}\right\rangle \left\langle a_{i}\right|U_{k}\right), \nonumber 
\end{eqnarray}
where $U_{k}$ is a matrix of size $N^{2}\times k$ fulfilling the
relation $U_{k}^{\dagger}U_{k}=\1_k$, and $\1_k$ denotes the $k\times k$ identity. 
For a specific choice $\tilde{U}_k$ of the matrix $U_k$, such that the vectors
$\ket{res(A_1)},\ldots,\ket{res(A_k)}$ belong to the subspace spanned by all $k$
columns of $\tilde{U}_k$, we have $\textrm{Tr} \left(\tilde{U}_{k}^{\dagger}
\left|a_{i}\right\rangle \left\langle a_{i}\right|\tilde{U}_{k}\right)=1$, for
$i=1,\ldots,k$. This property implies 
\begin{equation}
\begin{split}
\sum_{i=1}^{k}\lambda_{i}\left(D_{\Phi}\right)&=
\max_{U_{k}}\sum_{i=1}^{l}\kappa_{i}\textrm{Tr}\left(U_{k}^{\dagger}\left|a_{i}\right\rangle \left\langle a_{i}\right|U_{k}\right)\\
&\geq\sum_{i=1}^{l}\kappa_{i}\textrm{Tr}\left(\tilde{U}_{k}^{\dagger}\left|a_{i}\right\rangle \left\langle a_{i}\right|\tilde{U}_{k}\right) \\
&=\sum_{i=1}^{k}\kappa_{i}+\sum_{i=k+1}^{l}\kappa_{i}\textrm{Tr}\left(\tilde{U}_{k}^{\dagger}\left|a_{i}\right\rangle \left\langle a_{i}\right|\tilde{U}_{k}\right)\\
&\geq\sum_{i=1}^{k}\kappa_{i} .
\end{split}
\end{equation}
In the last inequality we neglected the remaining non-negative terms
labeled by $i>k$.
 The above set of inequalities imply
the majorization relation,  $\kappa \prec \lambda(D_\phi)$. Using the fact that
R{\'e}nyi entropies are Schur--concave, we arrive at the desired inequality,
\begin{equation}
\begin{split}
H_q\left(P,\frac1N \1 \right) &= S_q\left(\left\{\tr P_i \frac1N \1\right\}\right) = S_q(\kappa/N)\\
& \geq S_q(\lambda(D_\Phi)/N) = S_q^{\mathrm{map}}(\Phi).
\end{split}
\label{eq21}
\end{equation}
\end{proof}
Using the monotonicity of the R{\'e}nyi entropies, we get
\begin{equation}
S_q^{\mathrm{map}} (\Phi) \geq S_\infty^{\mathrm{map}} (\Phi) = - \log (\lambda_1(D_{\Phi})/N).
\end{equation}
For $q=1$ this inequality combined with (\ref{eq21}) resembles the
uncertainty principle for a single quantum measurement \cite{KP02,CYGG11,CCYZ12},
since for an optimal POVM the lower bound obtained in these papers
depends on $c=\max_j \tr P_j = \lambda_1/N$.

\subsection{Receiver entropy}

Since a superoperator matrix $\Phi$ is in general not Hermitian, we characterize
this matrix by means of the entropy of the normalized vector of its singular
values
\begin{equation}
S^{\rec}(\Phi)=-\sum_i \mu_i \ln \mu_i .
\label{receiver}
\end{equation}
Here $\mu_i=\frac{\sigma_i}{\sum_k\sigma_k}$ and $\sigma_i$ denote the singular
values of $\Phi$, so that $\sigma_i^2$ are eigenvalues of the positive matrix
$\Phi\Phi^{\dagger}$. This quantity characterizes an arbitrary matrix $\Phi$ and
depends only on its singular values, so it was called \emph{singular quantum
entropy} by Jumarie \cite{Ju00}.

In the case of one--qubit states entire set $\c M_2$ can be represented as a
three-dimensional ball of the radius one - Bloch ball. For one qubit
bistochastic channels,  which preserve the center of the Bloch ball,  the
entropy $S^{\rec}(\Phi)$ characterizes the vector $\{1,\eta_1,\eta_2,\eta_3\}$
after normalization to unity, where  $\{\eta_1,\eta_2,\eta_3\}$  denote the
lengths of semiaxes of an ellipsoid obtained as the image of the Bloch ball
under a quantum operation $\Phi$ -- see \cite{fuji,RSW02}
 and Fig. \ref{elipsoida}.
As demonstrated with a few examples presented below
the entropy $S^{\rec}$ characterizes
the a priori knowledge of the receiver of the outcome of a quantum channel $\Phi$. Therefore the
quantity (\ref{receiver}) will be called the \emph{receiver entropy}.
\begin{figure}
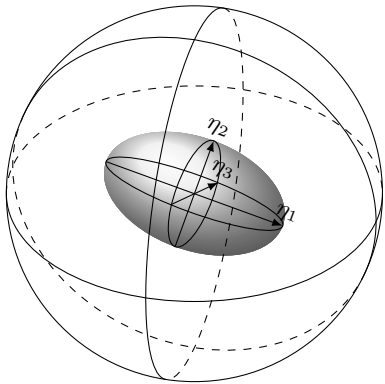

\ellipsoidInBall
\caption{(Color online) An image of the Bloch ball under an exemplary
one--qubit bistochastic map $\Phi$.}
\label{elipsoida}
\end{figure}

To illustrate the meaning of the receiver entropy let us consider the
following examples of one-qubit channels. In the case of completely
depolarizing channel $\Phi_*:\rho\rightarrow\rho_*=\frac{1}{N}\idty$ the entire
Bloch ball is transformed into a single point. All three semiaxes vanish,
 $\eta_i=0$, so $\mu=(1,0,0,0)$ and
the receiver entropy is equal to $0$.
 This value characterizes the perfect knowledge of the receiver of the states
 transmitted through the channel, since having the information about this channel
the receiver knows  that every time he or she obtains
 the very same output state.

In the case of a coarse graining channel $\Phi_{CG}$, which
sets all off--diagonal elements of a density matrix to zero,  and preserves the
diagonal populations unaltered,
the Bloch ball is transformed into an interval of unit length.
This means that the information about
the possible output state missing to the receiver
can be described by a single variable only.
In this case one has $\mu=(1/2,1/2,0,0)$,
so the entropy reads $S^{\rec}(\Phi_{CG})= \ln 2$.
Consider now an arbitrary unitary channel, which only rotates the entire Bloch ball.
Then the receiver has no knowledge in which part of the Bloch ball the output state will
appear, and the corresponding receiver entropy is maximal, $S^{\rec}=2\ln 2$.

In general, the receiver entropy is bounded by the logarithm of the rank of the
superoperator characterizing the channel $S^{\rec}(\Phi)\leq \ln{\rm rank}
(\Phi)$. Every quantum state can be represented by a real vector in the basis of
generalized Pauli matrices (see for instance \cite{pittenger}). Therefore, in
this basis the superoperator is a real matrix and its rank characterizes the
dimensionality of the vector space accessible for the outcomes from the channel.

Consider any orthonormal basis $\{K_i\}$ with respect to the Hilbert-Schmidt scalar product
which includes rescaled identity. 
Such a set of matrices satisfies normalization condition $\sum_iK_i^{\dagger}K_i=\idty$,
therefore can define the POVM measurement.
During the measurements of a quantum state $\rho$ 
the outcomes $K_i\rho K_i^{\dagger}/(\tr K_i\rho K_i^{\dagger})$ 
are observed with probabilities $p_i=\tr K_i\rho K_i^{\dagger}$.
The receiver entropy is related to the probability
distribution characterizing frequency of different outcomes of the measurement apparatus. 
If the entropy is low the receiver may expect
that  only a small amount of outcomes of the measuring apparatus will occur.
High values of the entropy imply that several different results of the measurement
will appear. 
Hence the receiver entropy $S^{\rec}$ characterizes the number of
measurement operators needed to obtain a complete information about the measured state.

\section{One qubit examples}

To analyze discrete dynamics of a one--qubit system let us define two matrices
of order four:
\begin{equation}
G=
\left(\begin{smallmatrix}
1 &0 &0 &0 \\
0 &1 &0 &0 \\
0 &0 &1 &0 \\
0 &0 &0 &1
\end{smallmatrix}
\right)
=
\left(\begin{smallmatrix}
G_1 & | & G_2 \\
 -- & + & --  \\
G_3 & | & G_4 \\
\end{smallmatrix}
\right),
\label{blo}
\end{equation}
and
\begin{equation}
C=G^R=
\left(\begin{smallmatrix}
1 &0 &0 &1 \\
0 &0 &0 &0 \\
0 &0 &0 &0 \\
1 &0 &0 &1
\end{smallmatrix}
\right).
\end{equation}
Note that the first row of $C$ is obtained by reshaping the block $G_1$ into a
vector, the second row of $C$ contains the reshaped block $G_2$, etc. Such
transformation of a matrix is related to the fact that in any linear, one--qubit
map, $\rho'=\Phi\rho$, the $2 \times 2$  matrix $\rho$ is treated as a vector of
length $4$.

Normalizing the spectra of both matrices to unity we get the entropies
$S(G)=\ln{4}$ and $S(C)=0$. Making use of the above notation we can represent
the identity map $\Phi_{\idty}=G$ and the corresponding dynamical matrix
$D_{\idty}=\Phi_{\idty}^R=C$. Moreover, the completely depolarizing channel
which maps any state $\rho$ into the maximally mixed state
$\Phi_{*}:\rho\rightarrow\frac{1}{2}\idty$ can be written as
$\Phi_{*}=\frac{1}{2}C$ while $D_{*}=\Phi_{*}^R=\frac{1}{2}G$. In both cases the
sum of the entropy of a dynamical matrix $S^{\map}\equiv S(\frac{1}{2}D)$ and
the entropy of normalized singular values of superoperator $S^{\rec}\equiv
S(\frac{|\Phi|}{\tr{|\Phi|}})$ reads $S^{\map}+S^{\rec}=2\ln{2}$. Thus, both maps
$\Phi_{\idty}$ and $\Phi_{*}$ are in a sense distinguished, as they occupy
extreme positions at both entropy axes. It is easy to see that the above
reasoning can be generalized for an arbitrary dimension $N$.  For the identity
map acting on $\c M_N$ and the maximally depolarizing channel $\Phi_{*}$ one
obtains $S^{\map}+S^{\rec}=2\ln{N}$. For these two maps the above relation holds
also for the R{\'e}nyi entropies, $S_q^{\map}+S_q^{\rec}=2\ln{N}$.

Investigations of one--qubit quantum operations enabled us to specify the set of
admissible values of the channel entropy $S^{\map}(\Phi)$ and the receiver
entropy $S^{\rec}(\Phi)$. 
We analyzed the images of the set of one--qubit quantum maps 
on to the plane   $(S^{\map}, S^{\rec})$. 
This problem  was first analyzed numerically
by constructing random one-qubit maps \cite{frob}
and marking their position on the plane. 
A special care was paid to the case of bistochastic maps, i.e. maps preserving the identity,
which form a tetrahedron spanned by the identity $\sigma_0=\idty$ and the three Pauli 
matrices $\sigma_i$ (see e.g. \cite{BZ06})
\begin{equation}
\Phi_{bist}(\rho)=\sum_{i=0}^3p_i\sigma_i\rho \sigma_i.
\label{paulichan}
\end{equation}

Fig. \ref{fig:sketch} can thus be interpreted as a
non--linear projection of the set of all one--qubit channels onto the plane
$(S^{\map},S^{\rec})$,
in which bistochastic maps correspond to the dark stripped region.

\begin{figure}[ht]
\centering
\scalebox{.4}{\includegraphics{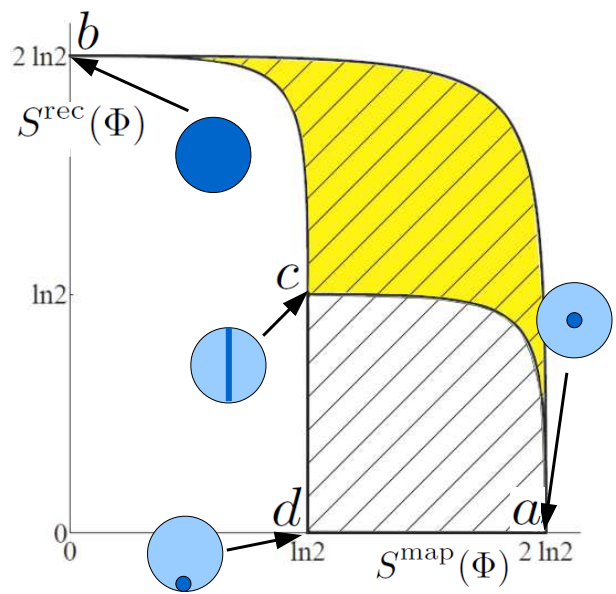}}
\caption{(Color online) 
Striped region denotes the allowed set of points representing one--qubit quantum
operations characterized by the entropy of the channel $S^{\map}(\Phi)$ and the
receiver entropy $S^{\rec}(\Phi)$. Gray (colored) region represents
bistochastic channels, while white stripped region corresponds to interval
channels. Distinguished points of the allowed set represent: $a$ --- completely
depolarizing channel, $b$ --- identity channel, $c$ --- coarse graining channel
and $d$ --- channel completely contracting the entire Bloch ball into a given
pure state.  The action of these four channels on the Bloch ball is
schematically shown on the auxiliary circular plots.
} \label{fig:sketch}
\end{figure}

\medskip

The distinguished points of the allowed region in the plane
$(S^{\map},S^{\rec})$ correspond to:
\begin{itemize}
\item $a)$ completely depolarizing channel:
$\Phi_*:\rho\rightarrow\rho_*=\frac{1}{2}{\1}_2$ ,
\item $b)$ identity channel  $\Phi_{\idty}={\1}$,
\item $c)$ coarse graining channel $\Phi_{CG}$, which sets all off--diagonal
elements of a density matrix to zero, and preserves the diagonal populations
unaltered,
\begin{equation}
\Phi_{CG}=
\left(
\begin{smallmatrix}
1 &0 &0 &0 \\
0 &0 &0 &0 \\
0 &0 &0 &0 \\
0 &0 &0 &1
\end{smallmatrix}
\right),
\end{equation}
\item $d)$ spontaneous emission channel sending any state into a certain pure
state  (e.g. the ground state of the system),
$\Phi_{SE}:\rho\rightarrow|0\rangle\langle0|$.
\end{itemize}

Basing on the numerical analysis the following curves are recognized as the limits of the region
\begin{itemize}
\item the curve $ab$ given by the depolarizing channels
$\Phi_{\alpha}=\alpha\idty+(1-\alpha)\Phi_*$, for $0\leq \alpha\leq 1$.
This family of states provides
the upper bound for the entire region of $(S^{\map},S^{\rec})$  available for the one-qubit quantum maps.
Because of the importance of this curve we provide its parametric expression
\begin{eqnarray*}
\ \ \ \ \  S^{\map}\!&=&\!-\frac{3}{4}(1\!-\!\alpha)\ln\!\left[\frac{1}{4}(1\!-\!\alpha)\right]\!-\!(1\!+\!3\alpha)\ln(1\!+\!3\alpha),\\
S^{\rec}&=&\ln(1+3\alpha)-\frac{3\alpha\ln\alpha}{(1+3\alpha)}.
\end{eqnarray*}
\item the curve $bc$ which represents the  combination of identity and the
coarse graining,
\item the interval $ad$ which represents completely contracting channels:
linear combinations of the completely
depolarizing channel and the spontaneous emission,
\item the interval $cd$ which includes the maps of the form:
\begin{equation}
\Phi_{cd}=
\left(
\begin{smallmatrix}
\alpha &0 &0 &\beta \\ \
\sqrt{\alpha(1-\alpha)}e^{i\phi_1} &0 &0 &\sqrt{\beta(1-\beta)}e^{i\phi_2} \\
\sqrt{\alpha(1-\alpha)}e^{-i\phi_1} &0 &0 &\sqrt{\beta(1-\beta)}e^{-i\phi_2} \\
1-\alpha &0 &0 & 1-\beta
\end{smallmatrix}
\right)
\label{czysty}
\end{equation}
\end{itemize}
with $\alpha, \beta \in (0,1)$
and two arbitrary phases $\phi_1$ and $\phi_2$.

The above maps belong to a broader family of one--qubit operations:
\begin{equation}
\Phi_I=
\left(
\begin{smallmatrix}
\alpha &0 &0 &\beta \\ \
\gamma_1 &0 &0 &\gamma_2 \\
\bar{\gamma}_1 &0 &0 &\bar{\gamma}_2 \\
1-\alpha &0 &0 & 1-\beta
\end{smallmatrix}
\right),
\end{equation}
with complex numbers $\gamma_1$ and $\gamma_2$,
such that the first column
reshaped into a matrix of order two forms a positive state $\rho_1$,
while the reshaped
last column corresponds to a state $\rho_2$.

These operations can be called \emph{interval channels}, as they transform the
entire Bloch ball into an interval given by the convex combination of the states
$\rho_1$ and $\rho_2$. The dynamical matrix corresponding to an interval map can
be transformed by permutations into a block diagonal form.

Fig. \ref{fig:sketch} representing all one-qubit channels
distinguishes two regions. Bistochastic quantum
operations
correspond to the dark region. 
As the set of  one--qubit bistochastic maps forms a tetrahedron (a convex set given in Eq. (\ref{paulichan}))
spanned by three Pauli matrices and identity (see \cite{BZ06}),
to justify this observation that the bistochastic maps cover the dark region of Fig. \ref{fig:sketch}  it is sufficient to analyze the images
of the edges of the antisymmetric part of the tetrahedron onto the plane
$(S^{\map},S^{\rec})$.
The white striped region $acd$ contains for instance interval maps, which will be
shown in Proposition \ref{propozition1}. Note that there exist several maps
which correspond to a given point in Fig. \ref{fig:sketch}.

A further insight into the interpretation of the receiver entropy is due to the
fact that for any completely contractive channel (interval $ad$ in the plot),
which sends any initial state
into a concrete, selected  state, $\Phi_{\xi}:\rho\rightarrow\xi$,  the receiver
entropy is equal to zero. This is implied by the fact that the dynamical matrix of
such an operation reads $D_{\Phi_{\xi}}=\xi\otimes\idty$. After reshuffling of
this matrix we obtain the superoperator matrix of rank one, since all non-zero
column are the same, therefore it has only one nonzero singular value.
Normalization of the vector of singular values sets this number to unity so that
$S^{\rec}(\Phi_{\xi})=0$. This observation  supports an interpretation of
$S^{\rec}$ as the amount of information missing to the receiver of the output
$\rho'$  of a quantum channel, who  knows the operation $\Phi$,  but does not
know the input state~$\rho$.

\section{Entropic inequalities for reordered matrices}
Before we establish several 
trade--off
relations for quantum channels we shall introduce a  framework concerning matrices 
(in general non--hermitian) together with their reordered counterparts. 
An arbitrary $d\times d$ matrix $X$ has $d^2$ independent matrix elements. A matrix $Y_\pi$ can be called a reordering of $X$ if $Y_\pi=X^\pi$, where $\pi$ denotes some permutation of matrix entries. Thus, for each matrix $X$ we can consider $\left(d^2\right)!$ reordered matrices $Y_\pi$.

Denote by $x_{i}$ the singular values of the matrix $X$ and introduce the following $q$-norms:
\begin{equation}\label{qnorms}
\left\Vert X\right\Vert _q=\left(\textrm{Tr}\left[XX^{\dagger}\right]^{q/2}\right)^{1/q}=\left(\sum_{i}x_{i}^{q}\right)^{1/q}.
\end{equation}
Moreover, by $x_{1}\equiv\left\Vert X\right\Vert _{\infty}$ denote the
greatest singular value of the matrix $X$ and by
\begin{equation}
\Lambda_{x}=\left\Vert X\right\Vert _{1}=\sum_{i}x_{i},
\end{equation}
the trace norm of $X$, i.e. the sum of all singular values $x_i$. Finally, define the R{\'e}nyi
entropy
\begin{equation}
S_{q}\left(X\right)=\frac{1}{1-q}\ln\sum_{i}\left(\frac{x_{i}}{\left\Vert X\right\Vert _{1}}\right)^{q}.
\end{equation}
The first result holds in general. \begin{lemma}\label{lemma2}
For an arbitrary matrix $X$ and $1\leq q<\infty$ we have
\begin{equation}
\ln\left(\frac{\Lambda_{x}}{x_{1}}\right)\leq S_{q}\left(X\right)\leq\frac{q}{q-1}\ln\left(\frac{\Lambda_{x}}{x_{1}}\right),\label{boundy1}
\end{equation}
\end{lemma}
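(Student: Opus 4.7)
The plan is to reduce everything to elementary manipulations of the probability vector $p_i=x_i/\Lambda_x$ built from normalized singular values. In this notation $p_1=x_1/\Lambda_x$ is the largest component, $\ln(\Lambda_x/x_1)=-\ln p_1$ is precisely the min-entropy $S_\infty$, and $S_q(X)$ is the usual R\'enyi entropy of the distribution $\{p_i\}$. So both inequalities become a comparison between $S_q$ and $S_\infty$ for this distribution.

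For the \emph{lower} bound I would invoke the monotonicity of R\'enyi entropies in $q$, which the excerpt already quotes from \cite{beck}: $S_q$ is non-increasing in $q$, hence $S_q(X)\geq S_\infty(X)=-\ln p_1=\ln(\Lambda_x/x_1)$. If a self-contained derivation is preferred, a direct two-step computation works: since $p_i\leq p_1$ and $q\geq 1$, one has $p_i^{\,q}\leq p_1^{\,q-1}p_i$; summing over $i$ yields $\sum_i p_i^{\,q}\leq p_1^{\,q-1}$, and then taking the logarithm and dividing by $1-q<0$ (which reverses the inequality) gives exactly $S_q(X)\geq -\ln p_1$.

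For the \emph{upper} bound the key step is the trivial inequality $\sum_i p_i^{\,q}\geq p_1^{\,q}$, obtained by keeping only the first term of a sum of non-negative quantities. Taking logarithms gives $\ln\sum_i p_i^{\,q}\geq q\ln p_1$, and multiplying through by $1/(1-q)<0$ for $q>1$ flips the inequality and produces $S_q(X)\leq -\frac{q}{q-1}\ln p_1=\frac{q}{q-1}\ln(\Lambda_x/x_1)$, which is the desired bound.

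I do not expect any real obstacle here; the lemma is a short combinatorial fact about probability distributions whose content is independent of the matrix structure beyond the choice to normalize singular values by their sum. The only points requiring care are the sign flip when dividing by $1-q$, and the boundary case $q=1$: the lower bound at $q=1$ is just $S_1\geq -\sum_i p_i\ln p_1=-\ln p_1$, while the upper bound holds vacuously since $q/(q-1)\to\infty$ as $q\to 1^+$.
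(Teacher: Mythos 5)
Your proof is correct and is essentially the paper's argument written in normalized form: the two elementary bounds you use, $\sum_i p_i^{\,q}\ge p_1^{\,q}$ and $\sum_i p_i^{\,q}\le p_1^{\,q-1}$, are exactly the chain $\left\Vert X\right\Vert _{\infty}\leq\left\Vert X\right\Vert _q\leq\left\Vert X\right\Vert _{1}^{1/q}\left\Vert X\right\Vert _{\infty}^{(q-1)/q}$ on which the paper's proof rests, the only cosmetic difference being that the paper derives the second inequality via a H{\"o}lder argument (Lemma~\ref{lemma1}) whereas you obtain it by the direct termwise estimate $p_i^{\,q}\le p_1^{\,q-1}p_i$. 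Your handling of the sign flip when dividing by $1-q$ and of the endpoint cases $q=1$ and $q\to\infty$ is also sound.
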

The second inequality relates matrices $X$ and $Y_\pi$. \begin{lemma}\label{lemma3}
If $Y_\pi=X^{\pi}$ where the transformation $\pi$ is an arbitrary permutation
of matrix entries, then we have for $1\leq q<\infty$
\begin{equation}
F_{\textrm{min}}\ln\left(\frac{\Lambda_{y}}{\sqrt{x_{1}\Lambda_{x}}}\right)\leq S_{q}\left(Y_\pi\right)\leq F_{\textrm{max}}\ln\left(\frac{\Lambda_{y}}{x_{1}}\right),\label{boundyb}
\end{equation}
where $F_{\textrm{min}}=\min\left(\frac{q}{q-1};2\right)$ and $F_{\textrm{max}}=\max\left(\frac{q}{q-1};2\right)$.\end{lemma}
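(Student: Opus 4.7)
The plan is to apply Lemma~\ref{lemma2} directly to the matrix $Y_\pi$ itself (since Lemma~\ref{lemma2} imposes no structural hypothesis on its matrix), and then translate the resulting bounds, which are stated in terms of the largest singular value $y_1$ of $Y_\pi$, into bounds involving $x_1$ and $\Lambda_x$. The bridge between the two matrices is the Hilbert--Schmidt norm, because permuting matrix entries manifestly preserves the sum of squared moduli:
\begin{equation}
\left\Vert Y_\pi\right\Vert _2^2=\sum_{j,k}\left|X_{jk}\right|^2=\left\Vert X\right\Vert _2^2=\sum_i x_i^2.
\end{equation}
Combining this identity with the elementary bracketing $x_1^2\leq \sum_i x_i^2\leq x_1\Lambda_x$ and with $y_1\leq\|Y_\pi\|_2$ yields two workhorse estimates that will be used repeatedly: $\|Y_\pi\|_2\geq x_1$ and $y_1\leq\sqrt{x_1\Lambda_x}$.

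I would then split the argument at $q=2$, where both $F_{\min}$ and $F_{\max}$ change functional form. For the \textbf{upper bound} in the range $1\leq q\leq 2$ (so $F_{\max}=q/(q-1)$), the Schatten $q$-norm is non-increasing in $q$, which together with the first workhorse estimate gives $\|Y_\pi\|_q\geq\|Y_\pi\|_2\geq x_1$, hence $\sum_i y_i^q\geq x_1^q$. Substituting this into
\begin{equation}
S_q(Y_\pi)=\frac{q}{q-1}\ln\Lambda_y-\frac{1}{q-1}\ln\sum_i y_i^q,
\end{equation}
and noting that $-1/(q-1)\leq 0$, produces the inequality $S_q(Y_\pi)\leq\frac{q}{q-1}\ln(\Lambda_y/x_1)$ at once. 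For $q\geq 2$ (so $F_{\max}=2$), monotonicity of R\'enyi entropies in $q$ gives $S_q(Y_\pi)\leq S_2(Y_\pi)=2\ln\Lambda_y-\ln\|Y_\pi\|_2^2\leq 2\ln(\Lambda_y/x_1)$, again by $\|Y_\pi\|_2\geq x_1$.

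For the \textbf{lower bound} the structure mirrors the upper one. In the range $1\leq q\leq 2$ (so $F_{\min}=2$), monotonicity reduces the task to $q=2$, and the closed form $S_2(Y_\pi)=2\ln\Lambda_y-\ln\|Y_\pi\|_2^2$ together with $\|Y_\pi\|_2^2\leq x_1\Lambda_x$ yields $S_2(Y_\pi)\geq 2\ln(\Lambda_y/\sqrt{x_1\Lambda_x})$. The case $q>2$ (so $F_{\min}=q/(q-1)$) is the most laborious step. Here I would use the elementary inequality $\sum_i y_i^q\leq y_1^{q-2}\sum_i y_i^2=y_1^{q-2}\|Y_\pi\|_2^2$, then invoke both workhorse estimates to conclude
\begin{equation}
\sum_i y_i^q\leq (x_1\Lambda_x)^{(q-2)/2}\cdot x_1\Lambda_x=(x_1\Lambda_x)^{q/2}.
\end{equation}
Substituting this back into the definition of $S_q$ and collecting terms produces exactly $\frac{q}{q-1}\ln(\Lambda_y/\sqrt{x_1\Lambda_x})$.

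The main obstacle is bookkeeping rather than substance: ensuring in the $q>2$ lower bound that the $\ln y_1$ and $\ln\|Y_\pi\|_2^2$ contributions combine cleanly into a single factor $(x_1\Lambda_x)^{q/2}$ with the right exponent. One must also observe that $q=1$ is covered by continuity: at $q=1$ the formal expression $q/(q-1)$ is undefined, $F_{\max}=+\infty$ so the upper bound is vacuous, and the lower bound follows from the $1\leq q\leq 2$ argument by taking $q\to 1^+$.
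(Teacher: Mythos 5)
Your proposal is correct and follows essentially the same route as the paper's proof: both rest on the invariance of the Hilbert--Schmidt norm under entry permutations, the two bracketing estimates $x_1\leq\|X\|_2\leq\sqrt{x_1\Lambda_x}$, and a case split at $q=2$ handled via monotonicity of the $q$-norms and of the R\'enyi entropies. Your $q>2$ lower bound uses $\sum_i y_i^q\leq y_1^{q-2}\sum_i y_i^2$ where the paper uses $\|Y_\pi\|_q\leq\|Y_\pi\|_2$ directly, but this is only a cosmetic difference in the intermediate step.
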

 The symbol $\Lambda_y$ inside Lemma \ref{lemma3} denotes the trace norm of $Y_\pi$. Both lemmas are proven in Appendix \ref{algebraic lemmas}.

\section{Trade--off relations for quantum channels}\label{uncer}
The structure of the set of allowed values of both entropies $S^{\map}$ and
$S^{\rec}$ describing all one--qubit stochastic maps shown in Fig.
\ref{fig:sketch} suggests that their sum is bounded from below. For the smaller
class of bistochastic maps the bound looks to be more tight. Indeed we are going
to prove the following trade--off relation for the sum of two von Neumann entropies
\begin{subequations}
\begin{equation}
 S^{\map}(\Phi)+ S^{\rec}(\Phi)\geq \ln{N},
\label{gen_1}
\end{equation}
and a sharper inequality
\begin{equation}
 S^{\map}(\Phi)+ S^{\rec}(\Phi)\geq 2\ln{N},
\label{bist_1}
\end{equation}
\end{subequations}
which holds for any bistochastic map acting on a $N$ dimensional system. Note that
the second expression can be interpreted as a kind of entropic trade--off
relation for unital quantum channels: if the map entropy $S^{\map}(\Phi)$, which
quantifies the interaction with the environment during the operation or the degree of disturbance of a quantum state, is small,
the receiver entropy $S^{\rec}(\Phi)$ cannot be small as well. This implies that
the results of the measurement could be very diverse. Conversely, a small value
of the receiver entropy implies that the map $\Phi$ is strongly contracting,  
so the map entropy is sufficiently large and a lot of information escapes from the
system to an environment and the
disturbance of the initial state is strong.

\medskip
Instead of proving directly the bounds (\ref{gen_1}) and (\ref{bist_1}) for the von Neumann entropy
$S\equiv S_1$ we are going to prove a more general inequalities formulated for the
R{\'e}nyi entropies $S_q$ with $q \in [1,\infty[$. All bounds in the limiting case $q=2$,
related to the Hilbert--Schmidt  norm of a matrix, are shown in Fig.
\ref{fig:sketch2} for one--qubit quantum  operations.

In the case of the $N^2 \times N^2$ matrices $\Phi$ and $D_{\Phi}$, let us denote by
$\sigma_{1}$ the greatest singular value of $\Phi$, by $d_{1}$
the greatest eigenvalue of $D_{\Phi}$, and by $\Lambda_{\Phi}=\left\Vert \Phi\right\Vert _{1}$
the sum of all singular values of $\Phi$. Since the Jamio{\l}kowski--Choi state $\omega_\Phi$ is normalized we have
$\left\Vert D_{\Phi}\right\Vert _{1}\equiv N$.

If we apply Lemma \ref{lemma2} to both matrices we obtain the following bounds:
\begin{subequations}
\begin{equation}
\ln\left(\frac{\Lambda_{\Phi}}{\sigma_{1}}\right)\leq S_q^{\rec}(\Phi)\leq\frac{q}{q-1}\ln\left(\frac{\Lambda_{\Phi}}{\sigma_{1}}\right),\label{boun1}
\end{equation}
\begin{equation}
\ln\left(\frac{N}{d_{1}}\right)\leq S_q^{\map}(\Phi)\leq\frac{q}{q-1}\ln\left(\frac{N}{d_{1}}\right).\label{boun2}
\end{equation}
Since $\Phi=D_\Phi^R$ and $D_\Phi=\Phi^R$,
 where the reshuffling operation $R$ is a particular example of reordering we have
 two additional bounds  originating from Lemma \ref{lemma3}:
\begin{equation}
F_{\textrm{min}}\ln\left(\frac{\Lambda_{\Phi}}{\sqrt{Nd_{1}}}\right)\leq  S_q^{\rec}(\Phi)\leq F_{\textrm{max}}\ln\left(\frac{\Lambda_{\Phi}}{d_{1}}\right),\label{boun3}
\end{equation}
\begin{equation}
F_{\textrm{min}}\ln\left(\frac{N}{\sqrt{\sigma_{1}\Lambda_{\Phi}}}\right)\leq S_q^{\map}(\Phi)\leq F_{\textrm{max}}\ln\left(\frac{N}{\sigma_{1}}\right).\label{boun4}
\end{equation}
\end{subequations}
The bounds (\ref{boun3}, \ref{boun4}) are in fact implied by the equality of Hilbert--Schmidt norms $\left\Vert \Phi\right\Vert _2=\left\Vert D_\Phi\right\Vert _2$, what is a consequence of the reshuffling relation $D_\Phi=\Phi^R$ .

Inequalities (\ref{boun1}---\ref{boun4}) provide individual limitations for ranges of the entropies $S_q^{\rec}$ and $S_q^{\map}$. However, if we consider a particular inequality we can always recover a full range $[0,2\ln N]$. The above inequalities can be combined in four different ways: (\ref{boun1}) with (\ref{boun2}),  (\ref{boun3}) with (\ref{boun4}),  (\ref{boun1}) with (\ref{boun4}) and  (\ref{boun2}) with (\ref{boun3}) in order to obtain upper and lower bounds for the sum $S_q^{\map}+S_q^{\rec}$. These bounds shall depend on the three parameters: $\sigma_1$, $d_1$ and $\Lambda_\Phi$, thus, without an additional knowledge about these parameters, they do not lead to a trade--off relation. In particular, for $d_1=\sigma_1=\Lambda_\Phi=N$ we find from (\ref{boun1}---\ref{boun4}) that  $S_q^{\rec}=0$ and $S_q^{\map}=0$. This case would correspond
to a pure, separable Jamio{\l}kowski--Choi state $\omega_\Phi$.

In order to show that the above example cannot be realized by a CP TP map we shall prove the following theorem which provides an upper bound on the greatest singular value $\sigma_1$.
\begin{theorem} \label{th:super-bound}
Let $\Phi$ be a CP TP channel acting on a set of density operators of size $N$.
Its superoperator $\Phi$ is a $N^2 \times N^2$ matrix. The greatest singular
value $\sigma_1$ is:

\begin{enumerate}
\item given by the expression
\begin{equation}
\label{singq1}
\sigma_1(\Phi) = \max_{\rho \in \mathcal{M}_N} \sqrt{\frac{ \tr \Phi(\rho)^2}{\tr \rho^2}},
\end{equation}
\item bounded
\begin{equation}
\sigma_{1}\left(\Phi\right)\leq\sqrt{N\tau_1}\leq\sqrt{N},\label{1}
\end{equation}\end{enumerate}where $\tau_1$ denotes the greatest eigenvalue of the density matrix $\Phi\left(\frac{1}{N}\1\right)\in \c M_N$.
\end{theorem}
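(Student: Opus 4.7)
My plan is to prove part~(1) first and then deduce part~(2) from it. For part~(1), I start from the standard operator-norm formula $\sigma_{1}(\Phi)^{2} = \max_{X\neq 0}\tr\Phi(X)^{\dagger}\Phi(X)/\tr X^{\dagger}X$, obtained by identifying the Euclidean norm on $\Cx^{N^{2}}$ with the Hilbert--Schmidt norm on $N\times N$ matrices via vectorization. I then reduce the admissible $X$ in two steps. First, Hermiticity preservation (implied by complete positivity) lets one decompose $X=X_{1}+iX_{2}$ with $X_{1},X_{2}$ Hermitian and yields the orthogonality identities $\tr X^{\dagger}X=\tr X_{1}^{2}+\tr X_{2}^{2}$ and $\tr\Phi(X)^{\dagger}\Phi(X)=\tr\Phi(X_{1})^{2}+\tr\Phi(X_{2})^{2}$, so the ratio for $X$ is bounded by the larger ratio among $X_{1},X_{2}$. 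Second, for Hermitian $X$ I use the Jordan decomposition $X=X_{+}-X_{-}$ with $X_{\pm}\geq 0$ and $X_{+}X_{-}=0$: then $\tr X^{2}=\tr X_{+}^{2}+\tr X_{-}^{2}$, while $\tr\Phi(X)^{2}=\tr\Phi(X_{+})^{2}+\tr\Phi(X_{-})^{2}-2\tr\Phi(X_{+})\Phi(X_{-})$, and the cross term is non-negative because $\Phi(X_{\pm})\geq 0$ by positivity of $\Phi$. Combined with the elementary bound $(a+b)/(c+d)\leq\max(a/c,b/d)$, this shows the ratio for $X$ is dominated by the ratio for $X_{+}$ or $X_{-}$, both of which are positive semidefinite and, by scale invariance, may be normalized to a density matrix.

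For part~(2), pick a maximizer $\rho^{*}\in\c M_{N}$ in~(\ref{singq1}). Since the maximum over $\c M_{N}$ already coincides with the unconstrained maximum by part~(1), $\rho^{*}$ (up to rescaling by $\|\rho^{*}\|_{2}$) is a top eigenvector of the positive semidefinite superoperator $\Phi^{\dagger}\Phi$, so $\Phi^{\dagger}\Phi(\rho^{*})=\sigma_{1}^{2}\rho^{*}$. Taking the trace and using $\tr\rho^{*}=1$ gives $\sigma_{1}^{2}=\tr\Phi^{\dagger}\Phi(\rho^{*})$. The defining property of the Hilbert--Schmidt adjoint, $\tr\Phi^{\dagger}(Y)=\tr\Phi(\1)\,Y$ for every $Y$, then yields
\begin{equation*}
\sigma_{1}^{2} \,=\, \tr\bigl[\Phi(\1)\Phi(\rho^{*})\bigr] \,=\, N\tr\bigl[\Phi(\1/N)\Phi(\rho^{*})\bigr].
\end{equation*}
The matrix-H\"older bound $\tr AB\leq\|A\|_{\infty}\tr B$ for positive semidefinite $A,B$, together with the TP identity $\tr\Phi(\rho^{*})=1$, gives $\sigma_{1}^{2}\leq N\|\Phi(\1/N)\|_{\infty}=N\tau_{1}$. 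The remaining bound $\sqrt{N\tau_{1}}\leq\sqrt{N}$ is immediate from $\tau_{1}\leq 1$, valid for any density matrix.

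The main difficulty is the reduction from Hermitian to positive semidefinite $X$ in part~(1): unlike the Hermitian step, it uses positivity of $\Phi$ in an essential way, via the sign of the cross term $\tr\Phi(X_{+})\Phi(X_{-})$. Once the variational formula over $\c M_{N}$ is established, part~(2) reduces to the eigenvalue equation at $\rho^{*}$ combined with a single application of H\"older's inequality.
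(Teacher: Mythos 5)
Your proof is correct. Part (1) follows essentially the same route as the paper's Lemma \ref{lemma:herm-enough}: both arguments first split $X$ into Hermitian real and imaginary parts (using Hermiticity preservation so the cross terms vanish in the Hilbert--Schmidt norm), and then pass from a Hermitian matrix to a positive one by exploiting the sign of cross terms $\tr\Phi(A)\Phi(B)\geq 0$ for $A,B\geq 0$; the paper does this by expanding $\|\Phi(H)\|_{\mathrm{HS}}^2$ over the eigenbasis of $H$ and replacing $\lambda_i\lambda_j$ by $|\lambda_i\lambda_j|$, while you use the Jordan decomposition $X=X_+-X_-$ together with the mediant inequality $(a+b)/(c+d)\leq\max(a/c,b/d)$ --- the same mechanism in slightly different packaging. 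Part (2) is where you genuinely diverge. The paper fixes a Kraus representation $\Phi(\rho)=\sum_i A_i\rho A_i^{\dagger}$, applies the matrix Cauchy--Schwarz inequality to each term $\tr(A_j^{\dagger}A_i\rho A_i^{\dagger}A_j\rho)$ and then the scalar Cauchy--Schwarz inequality to the resulting double sum, arriving at the pointwise bound $\tr\Phi(\rho)^2\leq N\,\tr\rho^2\;\tr\bigl[\Phi(\1/N)\Phi(\tilde\rho)\bigr]$ with $\tilde\rho=\rho^2/\tr\rho^2$, valid for every $\rho$. You instead work only at a maximizer $\rho^{*}$ of (\ref{singq1}): since part (1) identifies the constrained maximum with the full operator norm, $\rho^{*}$ is a top eigenvector of $\Phi^{\dagger}\Phi$, and taking the trace of the eigenvalue equation together with the adjoint identity $\tr\Phi^{\dagger}(Y)=\tr[\Phi(\1)Y]$ yields the exact relation $\sigma_1^2=\tr[\Phi(\1)\Phi(\rho^{*})]$, after which a single application of $\tr AB\leq\|A\|_{\infty}\tr B$ finishes. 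Your route is representation--free (no Kraus operators needed) and arguably cleaner, and it moreover exhibits $\sigma_1^2$ as an exact overlap at the maximizer; the paper's route gives the stronger pointwise inequality valid for every state rather than only at the maximizer. Both arguments are valid and complete.
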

The bound $\sigma_1\leq\sqrt{N}$ is saturated for quantum channels which transform the maximally
mixed state onto a pure state (only in that case $\tau_1=1$).
In the case of a bistochastic map all eigenvalues of
$\Phi\left(\frac{1}{N}\1\right)$ are equal to $1/N$ and therefore
$\sigma_1(\Phi)\leq 1$.
The proof of Theorem \ref{th:super-bound} is presented in Appendix \ref{greatestsingular}.
Some other bounds on singular values of reshuffled
density matrices have been studied in \cite{chikwongli}. In particular, there was shown that for $\rho \in \c M_N$ the largest singular value of the matrix $\rho^R$ is greater than $N^{-1}$. Because $\Phi=N\omega_\Phi^R$ we immediately find that $\sigma_1\geq 1$. Thus, for bistochastic maps we have the equality $\sigma_1=1$.

We are now prepared to prove the following theorem which establishes the entropic  trade--off relations between
$S_q^{\map}$ and $S_q^{\rec}$.
\begin{theorem} \label{th:general-uncertainty}
For a CP TP map $\Phi$ acting on a system of an arbitrary dimension $N$ the
following relations hold:
\begin{enumerate}
\item For an arbitrary map $\Phi$
\begin{equation}\label{stoch_q}
S_q^{\map}(\Phi)+S_q^{\rec}(\Phi)\geq \frac{F_\textrm{min}}{2}\ln{N},
\end{equation}
\item If the quantum channel $\Phi$ is bistochastic
\begin{equation}\label{bistoch_q}
 S_q^{\map}(\Phi)+ S_q^{\rec}(\Phi)\geq F_\textrm{min}\; \ln{N}.
\end{equation}
\end{enumerate}
\end{theorem}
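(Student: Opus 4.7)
The plan is to add two of the four elementary bounds (\ref{boun1})--(\ref{boun4}) coming from Lemmas \ref{lemma2} and \ref{lemma3}, and then to reduce the right hand side using only the two universal facts at our disposal: the trivial majorization $\Lambda_\Phi\geq\sigma_1$, and the spectral estimate $\sigma_1\leq\sqrt{N}$ furnished by Theorem \ref{th:super-bound}, which sharpens to the equality $\sigma_1=1$ in the bistochastic case.

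Concretely, I would add the Lemma \ref{lemma2} lower bound (\ref{boun1}) on $S_q^{\rec}$ to the Lemma \ref{lemma3} lower bound (\ref{boun4}) on $S_q^{\map}$. Expanding the logarithms and collecting terms yields
\begin{equation*}
S_q^{\map}(\Phi)+S_q^{\rec}(\Phi)\ \geq\ \bigl(1-\tfrac{F_\textrm{min}}{2}\bigr)\ln\Lambda_\Phi-\bigl(1+\tfrac{F_\textrm{min}}{2}\bigr)\ln\sigma_1+F_\textrm{min}\ln N.
\end{equation*}
Because $F_\textrm{min}=\min(q/(q-1),2)\leq 2$, the coefficient multiplying $\ln\Lambda_\Phi$ is non-negative, so I may lower $\Lambda_\Phi$ to $\sigma_1$. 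The two $\ln\sigma_1$ contributions then collapse and the right hand side reduces to $F_\textrm{min}\ln(N/\sigma_1)$.

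From here both parts of the theorem follow at once. For a generic CP TP channel, Theorem \ref{th:super-bound} yields $\sigma_1\leq\sqrt{N}$ and hence $\ln(N/\sigma_1)\geq \tfrac12\ln N$, producing the bound (\ref{stoch_q}). For a bistochastic channel the equality $\sigma_1=1$, established in the paragraph following Theorem \ref{th:super-bound} by combining the upper estimate $\sigma_1\leq 1$ (Theorem \ref{th:super-bound} with $\tau_1=1/N$) and the lower estimate $\sigma_1\geq 1$ from \cite{chikwongli}, turns $\ln(N/\sigma_1)$ into $\ln N$ and delivers the sharper bound (\ref{bistoch_q}).

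The real puzzle is choosing the correct pair of inequalities from (\ref{boun1})--(\ref{boun4}): the other three combinations retain an uncontrolled dependence either on $d_1$ or on a factor of $\Lambda_\Phi$ whose coefficient cannot be absorbed by any available monotonicity. Pairing the weaker Lemma \ref{lemma2} bound for $S_q^{\rec}$ with the reshuffling bound for $S_q^{\map}$ is precisely what causes $\Lambda_\Phi$ to cooperate once the substitution $\Lambda_\Phi\geq\sigma_1$ is made, isolating the single scalar $\sigma_1$ against which Theorem \ref{th:super-bound} provides the necessary universal control.
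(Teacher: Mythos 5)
Your proposal is correct and follows essentially the same route as the paper: it adds the lower bounds (\ref{boun1}) and (\ref{boun4}), uses $F_\textrm{min}\leq 2$ together with $\Lambda_\Phi\geq\sigma_1$ to discard the residual term and reach $F_\textrm{min}\ln(N/\sigma_1)$, and then invokes Theorem \ref{th:super-bound}. The only cosmetic difference is that the paper passes through the intermediate bound $\tfrac{F_\textrm{min}}{2}\ln(N/\tau_1)$ and then specializes $\tau_1$, whereas you specialize $\sigma_1$ directly ($\sigma_1\leq\sqrt{N}$ in general, $\sigma_1\leq 1$ for bistochastic maps), which is equivalent.
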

Since for $q=1$ the coefficient $F_\textrm{min}=2$,
from Theorem \ref{th:general-uncertainty} we recover the particular
bounds (\ref{gen_1}, \ref{bist_1}) for the von Neumann entropies.
\begin{proof}[Proof of Theorem \ref{th:general-uncertainty}] In a first step we shall add two lower bounds present in   (\ref{boun1}) and (\ref{boun4}) to obtain
\begin{equation}\label{bound genL}
S_q^{\map}(\Phi)+S_q^{\rec}(\Phi)\geq F_\textrm{min}\ln\left(\frac{N}{\sigma_1}\right)+\left(1\!-\!\frac{F_\textrm{min}}{2}\right)\!\ln\!\left(\frac{\Lambda_\Phi}{\sigma_1}\right)\!.
\end{equation}
Since $F_\textrm{min}\leq2$ and the greatest singular value $\sigma_1$ is less than the sum $\Lambda_\Phi$ of all singular values, the second term is always nonnegative. Thus, due to the upper bound (\ref{1}) we have
\begin{equation}\label{bound genL2}
S_q^{\map}(\Phi)+S_q^{\rec}(\Phi)\geq \frac{F_\textrm{min}}{2}\ln\left(\frac{N}{\tau_1}\right).
\end{equation}
The first statement of Theorem  \ref{th:general-uncertainty} follows immediately, when instead of $\tau_1$ we put its maximal value $1$ into the inequality (\ref{bound genL2}).
 The second statement is related to the fact that bistochastic
quantum channels preserve the identity i.e. $\Phi\left(\frac{1}{N}\1\right)=\frac{1}{N}\1$. The greatest eigenvalue $\tau_1$ is in this case equal to $\frac{1}{N}$, thus the value of $N^2$ appears inside the logarithm and cancels the factor of $2$ in the denominator.
\end{proof}
In fact, the inequality (\ref{bound genL2}) quantifies the deviation from the set of bistochastic maps, with the greatest eigenvalue of $\Phi\left(\frac{1}{N}\1\right)$ playing the role of the interpolation parameter.
\subsection{Additional upper bounds}
The receiver entropy $S_q^{\rec}(\Phi)$ is upper bounded due to the relations (\ref{boun1}) and (\ref{boun3}). However, these bounds diverge in the limit $q\rightarrow1$. Since the greatest singular value $\sigma_1$ is not less than $1$ we can derive another upper bound which gives a nontrivial limitation valid for all values of $q$.
\begin{theorem} \label{th: recq}
For a CP TP map $\Phi$ acting on a system of an arbitrary dimension $N$ the
following relation holds:
\begin{equation}\label{recq}
 S_q^{\rec}(\Phi)\leq
 \frac{1}{1-q}\ln\left(\Lambda_\Phi^{-q}+\frac{\left(\Lambda_\Phi-1\right)^{q}}
{\Lambda_\Phi^q\left(N^2-1\right)^{q-1}}\right).
\end{equation}
\end{theorem}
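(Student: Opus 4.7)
The plan is to rewrite $S_q^{\rec}(\Phi)=\frac{1}{1-q}\ln\sum_i\mu_i^q$ with $\mu_i=\sigma_i/\Lambda_\Phi$. For $q>1$ the prefactor $1/(1-q)$ is negative, so an upper bound on the entropy is equivalent to a \emph{lower} bound on the sum $\sum_i\mu_i^q$; the case $q=1$ will then follow by continuity.

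To produce such a lower bound I would isolate the largest term $\mu_1^q$ and apply the power--mean (Jensen) inequality, using the convexity of $x\mapsto x^q$ for $q\geq 1$, to the remaining at most $N^2-1$ non--negative entries, whose total equals $1-\mu_1$. This yields
\[
\sum_i\mu_i^q\;\geq\;\mu_1^q+\frac{(1-\mu_1)^q}{(N^2-1)^{q-1}}\;\equiv\;f(\mu_1).
\]
A direct differentiation shows that the convex function $f$ attains its unique minimum at $\mu_1=1/N^2$, and is therefore monotonically increasing on $[1/N^2,1]$.

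I then need to verify that the admissible range of $\mu_1=\sigma_1/\Lambda_\Phi$ lies entirely inside this increasing regime. The estimate $\mu_1\geq 1/\Lambda_\Phi$ is immediate from $\sigma_1\geq 1$, as noted just after Theorem~\ref{th:super-bound}. For the complementary inequality $1/\Lambda_\Phi\geq 1/N^2$, I would use Cauchy--Schwarz over the $N^2$ singular values of $\Phi$, giving $\Lambda_\Phi\leq N\|\Phi\|_2$, combined with the reshuffling identity $\|\Phi\|_2=\|D_\Phi\|_2$ (reshuffling merely permutes matrix entries, hence preserves the Hilbert--Schmidt norm) and the trivial bound $\|\omega_\Phi\|_2\leq 1$ valid for any density matrix; together these give $\Lambda_\Phi\leq N^2$.

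Combining these two facts with the monotonicity of $f$ yields $\sum_i\mu_i^q\geq f(1/\Lambda_\Phi)=\Lambda_\Phi^{-q}+(\Lambda_\Phi-1)^q/(\Lambda_\Phi^q(N^2-1)^{q-1})$, and applying $\ln$ together with the negative factor $1/(1-q)$ reproduces exactly the bound in the statement. The main obstacle is the auxiliary estimate $\Lambda_\Phi\leq N^2$: without it, $1/\Lambda_\Phi$ could drop below $1/N^2$, i.e.\ into the \emph{decreasing} regime of $f$, and the lower bound on $\mu_1$ would no longer translate into a useful lower bound on $\sum_i\mu_i^q$. Everything else is a routine convexity calculation.
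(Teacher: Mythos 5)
Your argument is correct, and it arrives at the same extremal configuration as the paper, but by a different technical route. The paper's proof is a two-line majorization argument: since $\sigma_1\geq 1$, the vector of singular values $\boldsymbol{\sigma}$ majorizes $\boldsymbol{\sigma}_0=\bigl(1,\tfrac{\Lambda_\Phi-1}{N^2-1},\ldots,\tfrac{\Lambda_\Phi-1}{N^2-1}\bigr)$, and Schur concavity of the R\'enyi entropy gives $S_q^{\rec}(\Phi)\leq S_q(\boldsymbol{\sigma}_0/\Lambda_\Phi)$, which is exactly the stated bound. Your $f(1/\Lambda_\Phi)$ is precisely $\sum_i(\sigma_0)_i^q/\Lambda_\Phi^q$, so the two proofs bound the entropy by the same comparison vector; what differs is that you verify the comparison by hand (split off $\mu_1^q$, apply Jensen to the tail, then track the monotonicity of $f$ on $[1/N^2,1]$) instead of invoking Schur concavity. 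Your version buys two things: it is elementary and self-contained, and it makes explicit the auxiliary estimate $\Lambda_\Phi\leq N^2$, which the paper's majorization claim also needs (otherwise $\tfrac{\Lambda_\Phi-1}{N^2-1}>1$, the entries of $\boldsymbol{\sigma}_0$ are no longer in decreasing order, and the first partial-sum inequality does not reduce to $\sigma_1\geq 1$) but uses only implicitly, stating $\Lambda_\Phi\in[1,N^2]$ elsewhere. What the paper's route buys is uniformity in $q$: Schur concavity covers all $q\geq 0$ in one stroke, whereas your sign-chasing is tied to $q>1$ with $q=1$ recovered by continuity --- though this matches the range $q\in[1,\infty)$ the paper actually works with, so nothing is lost.
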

\begin{proof}
Since the map $\Phi$ is CP TP the greatest singular value $\sigma_1\geq1$. Thus, the vector $\boldsymbol{\sigma}$ of the singular values of the $N^2\times N^2$ matrix $\Phi$ majorizes ($\boldsymbol{\sigma}\succ\boldsymbol{\sigma}_0$) the vector:
\begin{equation}
\boldsymbol{\sigma}_{0}=\left(1,\underset{N^{2}-1}{\underbrace{\frac{\Lambda_{\Phi}-1}{N^{2}-1},\frac{\Lambda_{\Phi}-1}{N^{2}-1},\ldots,\frac{\Lambda_{\Phi}-1}{N^{2}-1}}}\right).
\end{equation}
Since $S_q^{\rec}(\Phi)= S_q(\boldsymbol{\sigma}/\Lambda_\Phi)$ and the R{\'e}nyi entropy is Schur concave we obtain the inequality $S_q^{\rec}(\Phi)\leq S_q(\boldsymbol{\sigma}_0/\Lambda_\Phi)$ which is equivalent to (\ref{recq}).
\end{proof}

As a limiting case of Theorem \ref{th: recq} we have the corollary
\begin{corollary}\label{upperB}
The von Neumann entropy $S^{\rec}(\Phi)$ is bounded
\begin{equation}
 S^{\rec}(\Phi)\leq \frac{\Lambda_\Phi-1}{\Lambda_\Phi}\ln\left(\frac{N^2-1}{\Lambda_\Phi-1}\right)+\ln\Lambda_\Phi\leq2\ln N.
\end{equation}
\end{corollary}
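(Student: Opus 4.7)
The plan is to obtain both inequalities by taking the limit $q\to 1$ in the bound of Theorem~\ref{th: recq}, together with one elementary upper bound on Shannon entropy.

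The starting observation is that the right-hand side of (\ref{recq}) is precisely $S_q(\boldsymbol{\sigma}_0/\Lambda_\Phi)$, the R\'enyi-$q$ entropy of the majorant distribution $\boldsymbol{\sigma}_0$ constructed inside the proof of Theorem~\ref{th: recq}. Since the R\'enyi family is continuous in its parameter at $q=1$, with limiting value equal to the Shannon entropy, taking $q\to 1$ on both sides of (\ref{recq}) yields the clean intermediate estimate $S^{\rec}(\Phi)\leq S(\boldsymbol{\sigma}_0/\Lambda_\Phi)$. Passing to the limit in this packaged form sidesteps the apparent $0/0$ indeterminacy one would encounter by expanding the right-hand side of (\ref{recq}) directly, where the argument of the logarithm collapses to $\Lambda_\Phi^{-1}+(\Lambda_\Phi-1)/\Lambda_\Phi=1$ at $q=1$.

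Next I would compute $S(\boldsymbol{\sigma}_0/\Lambda_\Phi)$ explicitly. The distribution has one atom of weight $1/\Lambda_\Phi$ and $N^2-1$ atoms each of weight $(\Lambda_\Phi-1)/[\Lambda_\Phi(N^2-1)]$, so
\[
S(\boldsymbol{\sigma}_0/\Lambda_\Phi)=\frac{\ln\Lambda_\Phi}{\Lambda_\Phi}-\frac{\Lambda_\Phi-1}{\Lambda_\Phi}\ln\frac{\Lambda_\Phi-1}{\Lambda_\Phi(N^2-1)}.
\]
A short algebraic rearrangement (splitting the last logarithm and gathering the two $\ln\Lambda_\Phi$ contributions via $1/\Lambda_\Phi+(\Lambda_\Phi-1)/\Lambda_\Phi=1$) brings this to $\ln\Lambda_\Phi+\tfrac{\Lambda_\Phi-1}{\Lambda_\Phi}\ln\tfrac{N^2-1}{\Lambda_\Phi-1}$, which is the first of the two stated bounds.

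The final inequality $\leq 2\ln N$ is then immediate: $\boldsymbol{\sigma}_0/\Lambda_\Phi$ is a probability distribution on $N^2$ outcomes, so its Shannon entropy is at most $\ln N^2=2\ln N$, with equality exactly when all $N^2$ components coincide, i.e.\ when $\Lambda_\Phi=N^2$. There is no real obstacle in this corollary; the only mild point is to legitimise the passage $q\to 1$ inside the R\'enyi-parameter inequality of Theorem~\ref{th: recq}, which is secured by continuity of $S_q$ at $q=1$, and the rest is elementary algebra plus the trivial Shannon bound.
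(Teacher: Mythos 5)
Your proof is correct and follows essentially the same route as the paper, which presents the corollary simply as the $q\to 1$ limiting case of Theorem~\ref{th: recq}; your explicit evaluation of $S(\boldsymbol{\sigma}_0/\Lambda_\Phi)$ and the final bound by $\ln N^2$ both check out. (One could avoid the limit entirely by applying Schur-concavity of the Shannon entropy directly to the majorization $\boldsymbol{\sigma}\succ\boldsymbol{\sigma}_0$, but that is only a cosmetic difference.)
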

\begin{figure}[ht]
\centering
\scalebox{.4}{\includegraphics{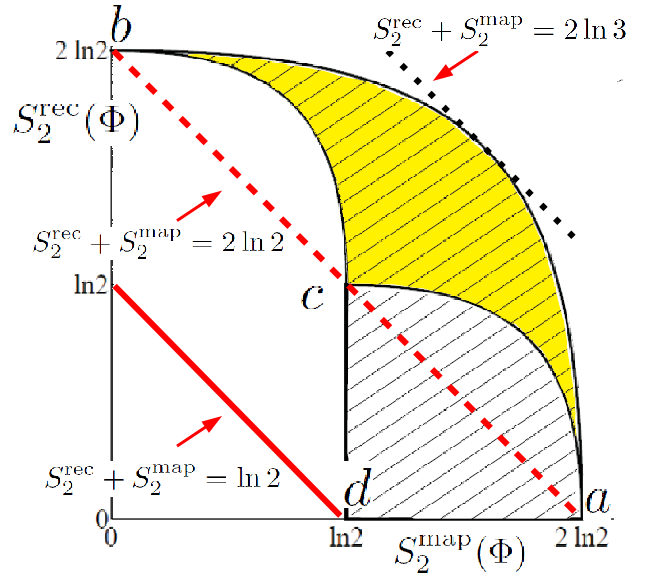}}
\caption{(Color online) 
Stripped region represents the set of one--qubit operations projected into the
plane spanned by the linear entropy of the map, $S_2^{\map}(\Phi)$,  and the
linear receiver entropy $S_2^{\rec}(\Phi)$, i.e. the R{\'e}nyi entropies of order
$q=2$. Dark region represents the bistochastic quantum operations. Dashed
antidiagonal line represents the lower bound (\ref{bistoch_q}) which holds for
bistochastic operations, while solid antidiagonal line denotes the weaker bound
(\ref{stoch_q}) which holds for all quantum operations.
Dotted antidiagonal line represents the upper bound (\ref{upp}) applied for $N=2$.
} \label{fig:sketch2}
\end{figure}

The relation between the matrices $\Phi$ and $D_\Phi$ allows us to derive an upper bound for the sum of the R{\'e}nyi entropies $S_2^{\map}(\Phi) +  S_2^{\rec}(\Phi)$.
\begin{proposition}\label{upperQ1}
The following relation holds:
\begin{equation}\label{upp}
S_2^{\map}(\Phi)+S_2^{\rec}(\Phi)\leq 2\ln\left(\frac{N(N+1)}{2}\right).
\end{equation}
\end{proposition}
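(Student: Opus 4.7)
The plan is to rewrite the claimed bound as a single algebraic inequality between the trace norm and the Hilbert--Schmidt norm of $\Phi$, and then to exploit the majorization already used in the proof of Theorem~\ref{th: recq}.

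First I would unfold both entropies from their definitions. Since $\omega_\Phi = D_\Phi/N$ and the reshuffling $D_\Phi = \Phi^R$ preserves the Hilbert--Schmidt norm, one has $\tr D_\Phi^2 = \sum_i \sigma_i^2$, so
\begin{equation*}
S_2^{\map}(\Phi) = 2\ln N - \ln\sum_i \sigma_i^2, \qquad S_2^{\rec}(\Phi) = 2\ln\Lambda_\Phi - \ln\sum_i \sigma_i^2 .
\end{equation*}
Adding these two and comparing with the right--hand side of (\ref{upp}) reduces the claim to the single algebraic inequality
\begin{equation*}
2\,\Lambda_\Phi \;\leq\; (N+1)\sum_i \sigma_i^2 .
\end{equation*}

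Next I would reuse the majorization established in the proof of Theorem~\ref{th: recq}. Because $\Phi$ is CP TP, $\sigma_1 \geq 1$, and this single fact forces
\begin{equation*}
\boldsymbol{\sigma} \;\succ\; \boldsymbol{\sigma}_0 = \left(1,\; \tfrac{\Lambda_\Phi - 1}{N^2-1},\; \ldots,\; \tfrac{\Lambda_\Phi - 1}{N^2-1}\right).
\end{equation*}
There this majorization was combined with Schur--concavity of the R{\'e}nyi entropy; here I would instead invoke Schur--convexity of the sum of squares $\vec x \mapsto \sum_i x_i^2$, obtaining
\begin{equation*}
\sum_i \sigma_i^2 \;\geq\; 1 + \frac{(\Lambda_\Phi - 1)^2}{N^2-1} .
\end{equation*}

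Finally I would substitute this lower bound into the target inequality. The condition $(N+1)\bigl(1 + (\Lambda_\Phi - 1)^2/(N^2-1)\bigr) \geq 2\Lambda_\Phi$, after clearing the factor $N-1$ and collecting terms, collapses to $(\Lambda_\Phi - N)^2 \geq 0$, which is manifest. Equality holds precisely at $\Lambda_\Phi = N$, matching the extremal depolarizing channel $\Phi_\alpha$ with $\alpha = 1/(N+1)$, on which one can directly verify that the proposition bound is saturated.

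The only genuinely nontrivial step is recognizing that the very majorization introduced to upper bound $S_q^{\rec}$ simultaneously provides a lower bound on $\sum_i \sigma_i^2$ via Schur--convexity; everything else is elementary algebra with logarithms and a single quadratic identity.
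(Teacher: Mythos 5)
Your proof is correct and takes essentially the same route as the paper: the identity $\tr D_\Phi^2=\sum_i\sigma_i^2$ (equivalently the relation (\ref{S2})) together with the majorization $\boldsymbol{\sigma}\succ\boldsymbol{\sigma}_0$ from the proof of Theorem~\ref{th: recq}, whose $q=2$ instance is exactly your Schur--convexity lower bound on $\sum_i\sigma_i^2$. The only difference is presentational: where the paper says ``perform the maximization over $\Lambda_\Phi\in[1,N^2]$'', you carry that step out explicitly and collapse it to $(\Lambda_\Phi-N)^2\geq 0$, which also cleanly identifies the extremal channel.
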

\begin{proof}
Since $\left\Vert \Phi\right\Vert _2=\left\Vert D_\Phi\right\Vert _2$ we have an easy relation between both entropies:
\begin{equation}\label{S2}
S_2^{\map}(\Phi)=S_2^{\rec}(\Phi)+2\ln N-2\ln\Lambda_\Phi.
\end{equation}
According to (\ref{recq}) we are able to estimate
\begin{equation}
S_2^{\map}(\Phi)+S_2^{\rec}(\Phi)\leq 2\ln (N\Lambda_\Phi)-2\ln\left(1+\frac{\left(\Lambda_\Phi-1\right)^{2}}{N^2-1}\right).
\end{equation}
In order to complete the proof of Proposition \ref{upperQ1} we shall perform the maximization of the above upper bound over the parameter $\Lambda_\Phi\in\left[1,N^2\right]$.
\end{proof}
The bound presented in Proposition~\ref{upperQ1} can be saturated by a
quantum channel, which is a mixture of the identity channel and the maximally
depolarizing channel, i.e.
\begin{equation}
\Phi = \frac{1}{N+1} \1 + \frac{N}{N+1}\Phi_{*}.
\end{equation}

In fact, we are able to generalize the relation (\ref{S2}) to the case of all $1\leq q\leq\infty$.
\begin{proposition}\label{upperQ2}
The following relation holds:
\begin{equation}\label{upp2}
S_q^{\map}(\Phi)\geq F_\textrm{min}\ln \frac{N}{\Lambda_\Phi} + G_\textrm{min} S_q^{\rec}(\Phi),
\end{equation}
where $G_\textrm{min}=\min\left(\frac{q}{2(q-1)};\frac{2(q-1)}{q}\right)$.
\end{proposition}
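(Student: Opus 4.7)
The plan is to convert the R\'enyi entropy inequality into an equivalent statement about the Schatten $q$-norms of $\Phi$ and $D_\Phi$, and then prove the norm version by combining the reshuffling identity $\|\Phi\|_{2}=\|D_\Phi\|_{2}$ (the same ingredient driving Proposition~\ref{upperQ1}) with Lyapunov's inequality --- the log-convexity of $\ln\|X\|_{p}$ as a function of $1/p$. Substituting the identity $S_q(X)=\frac{q}{q-1}\ln(\|X\|_1/\|X\|_q)$ into the target inequality, so that $S_q^{\map}(\Phi)=\frac{q}{q-1}\ln(N/\|D_\Phi\|_q)$ and $S_q^{\rec}(\Phi)=\frac{q}{q-1}\ln(\Lambda_\Phi/\|\Phi\|_q)$, a short rearrangement shows the claim is equivalent to the matrix-norm inequality
\begin{equation*}
\|D_\Phi\|_q \le N^{(2-q)/q}\,\|\Phi\|_q^{2(q-1)/q}\quad\text{for }1\le q\le 2,
\end{equation*}
and
\begin{equation*}
\|D_\Phi\|_q \le \Lambda_\Phi^{(q-2)/(2(q-1))}\,\|\Phi\|_q^{q/(2(q-1))}\quad\text{for }q\ge 2,
\end{equation*}
the case split matching exactly the two branches that define $G_\textrm{min}$.

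For $1\le q\le 2$, Lyapunov's inequality applied to the eigenvalues of $D_\Phi$, interpolated between $p=1$ and $p=2$ at the intermediate point $p=q$, gives $\|D_\Phi\|_q\le\|D_\Phi\|_1^{(2-q)/q}\|D_\Phi\|_2^{2(q-1)/q}=N^{(2-q)/q}\|D_\Phi\|_2^{2(q-1)/q}$. The reshuffling identity then replaces $\|D_\Phi\|_2$ by $\|\Phi\|_2$, and monotonicity of Schatten norms in $p$ (namely $\|\Phi\|_2\le\|\Phi\|_q$ for $q\le 2$) lets one replace $\|\Phi\|_2$ by $\|\Phi\|_q$, producing the desired bound. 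For $q\ge 2$ the argument is dual: monotonicity together with the reshuffling identity gives $\|D_\Phi\|_q\le\|D_\Phi\|_2=\|\Phi\|_2$, and then Lyapunov applied to the singular values of $\Phi$, this time with $p=2$ as the \emph{interior} point between $p=1$ and $p=q$, produces $\|\Phi\|_2\le\Lambda_\Phi^{(q-2)/(2(q-1))}\|\Phi\|_q^{q/(2(q-1))}$; chaining the two inequalities finishes the case. Substituting each matrix-norm inequality back into the entropy identities reproduces the claim with the correct $F_\textrm{min}$ and $G_\textrm{min}$, and at $q=2$ both regimes collapse to the equality $S_2^{\map}=S_2^{\rec}+2\ln(N/\Lambda_\Phi)$ used in the proof of Proposition~\ref{upperQ1}.

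The main conceptual obstacle is recognizing the right reformulation. The more obvious attempts --- combining the upper and lower bounds of Lemmas~\ref{lemma2} and~\ref{lemma3} in the four possible ways --- generically fall short by a factor of two in the coefficient of $S_q^{\rec}$ at $q=2$, because those bounds are anchored at $p=1$ and $p=\infty$, whereas the sharp identity $\|\Phi\|_2=\|D_\Phi\|_2$ lives at $p=2$. Lyapunov's inequality is what lets $p=2$ enter as an interpolation point and thereby preserve the sharpness; the asymmetry of whether $2$ sits on the boundary or in the interior of the interpolation interval is exactly what produces the two branches in $G_\textrm{min}=\min(q/(2(q-1)),\,2(q-1)/q)$.
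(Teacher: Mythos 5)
Your proof is correct and is essentially the paper's own argument rewritten in Schatten--norm language: the pivot is the $q=2$ identity $\left\Vert \Phi\right\Vert_2=\left\Vert D_\Phi\right\Vert_2$ (Eq.~(\ref{S2})), and your Lyapunov interpolation between $p=1$ and $p=2$ together with norm monotonicity are precisely the paper's R\'enyi monotonicity chain $S_q\geq S_2\geq \frac{2(q-1)}{q}S_q$ for $q\leq 2$ (reversed for $q\geq 2$), with the same case split at $q=2$ producing the two branches of $G_\textrm{min}$. Both versions yield identical constants, so no gap.
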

\begin{proof}
Assume that $q\leq2$. In that case we have the following monotonicity properties for the R{\'e}nyi entropy: $S_q\geq S_2\geq 2\left(\frac{q-1}{q}\right) S_q$. These relations together with Eq. (\ref{S2}) provide a chain of inequalities:
\begin{eqnarray}
S_{q}^{\textrm{map}}\left(\Phi\right) & \geq & S_{2}^{\textrm{map}}\left(\Phi\right)\nonumber \\
 & = & 2\ln\left(\frac{N}{\Lambda_{\Phi}}\right)+S_{2}^{\textrm{rec}}\left(\Phi\right)\\
 & \geq & 2\ln\left(\frac{N}{\Lambda_{\Phi}}\right)+2\left(\frac{q-1}{q}\right)S_{q}^{\textrm{rec}}\left(\Phi\right),\nonumber
\end{eqnarray}
The same method applied for $q\geq2$ with associated monotonicity relations  $2\left(\frac{q-1}{q}\right) S_q\geq S_2\geq  S_q$
completes  the proof of  inequality (\ref{upp2}).
\end{proof}

\medskip
We can also show in which region of the plot $(S^{\map}, S^{\rec})$ the interval
maps are located. Notice that the classical maps, which transform the set of
$N$--point probability vectors into itself, also satisfy these inequalities.

\begin{proposition}\label{propozition1}
The interval maps satisfy the following inequalities
$S^{\rec}(\Phi)\leq \ln{N} \leq S^{\map}(\Phi)$.
\end{proposition}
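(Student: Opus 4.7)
My plan is to exploit the structural form of interval channels. Such a map has image on the segment between two density matrices $\rho_1$ and $\rho_2$, so by linearity and trace preservation it can be written as
$$\Phi(\rho)=\tr(A\rho)\,\rho_1+\tr((\1-A)\rho)\,\rho_2$$
for some POVM element $0\leq A\leq \1$; the one-qubit form in Eq.~(\ref{czysty}) corresponds to $A=|0\rangle\langle 0|$. I would then treat the two inequalities independently.

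For $S^{\rec}(\Phi)\leq\ln N$ a simple rank count suffices. Viewed as a linear map on operators, the image of $\Phi$ is the span of $\rho_1$ and $\rho_2$ and is therefore at most two-dimensional, so the $N^2\times N^2$ superoperator matrix has rank at most two and possesses at most two nonzero singular values. Consequently the normalized singular-value vector is supported on at most two entries, giving $S^{\rec}(\Phi)\leq\ln 2\leq \ln N$ for $N\geq 2$.

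For $\ln N\leq S^{\map}(\Phi)$ I would first compute
$$D_\Phi=\sum_{ij}\Phi(|i\rangle\langle j|)\otimes|i\rangle\langle j|=\rho_1\otimes A^T+\rho_2\otimes(\1-A)^T,$$
and then diagonalize the Hermitian operator $A^T=\sum_k a_k|e_k\rangle\langle e_k|$ (with $a_k\in[0,1]$) in an orthonormal basis. Since $(\1-A)^T$ is simultaneously diagonal with eigenvalues $1-a_k$, a short rearrangement yields the classical-quantum decomposition
$$\omega_\Phi=\frac{1}{N}\sum_{k=1}^{N}\sigma_k\otimes|e_k\rangle\langle e_k|,\quad \sigma_k:=a_k\rho_1+(1-a_k)\rho_2,$$
where each $\sigma_k$ is a density matrix and the classical weight on the second factor is uniform, $p_k=1/N$---a reflection of the identity $\mathrm{Tr}_A\omega_\Phi=\1/N$ that holds for every trace-preserving channel. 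The standard entropy formula for classical-quantum states then delivers
$$S^{\map}(\Phi)=\ln N+\frac{1}{N}\sum_k S(\sigma_k)\geq \ln N.$$

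The one conceptual step that I expect to be the main obstacle is recognizing the classical-quantum structure of $\omega_\Phi$ on the second tensor factor; once that is in place both bounds are essentially algebraic. A secondary point requiring care is fixing a general-$N$ reading of \emph{interval map} (the paper supplies one explicitly only for qubits), but the natural extension to channels whose image is a segment between two fixed states specializes to Eq.~(\ref{czysty}) when $N=2$ and makes the argument go through uniformly.
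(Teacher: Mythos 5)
Your proof is correct and follows essentially the same route as the paper: the left inequality comes from the at most two--dimensional image of the superoperator (hence at most two nonzero singular values), and the right inequality from recognizing $\omega_\Phi$ as a block--diagonal, classical--quantum state $\frac{1}{N}\sum_k\sigma_k\otimes\ketbra{e_k}{e_k}$ with uniform weights and applying the grouping formula $S(\omega_\Phi)=\ln N+\frac{1}{N}\sum_kS(\sigma_k)$. Your derivation of that block structure from the explicit form $\Phi(\rho)=\tr(A\rho)\,\rho_1+\tr((\1-A)\rho)\,\rho_2$ is just a more explicit (and slightly more general) version of the paper's assertion that $D_\Phi$ is permutation--equivalent to a block--diagonal matrix whose blocks are the states $\rho_i$.
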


\begin{proof}
The left inequality concerning the receiver
entropy follows from the fact that the entire set of states
is mapped into an interval. To show the right inequality
observe that the dynamical matrix corresponding to
an interval channel is block diagonal or can be
transformed to this form by a permutation.
 Due to the trace preserving condition
every block of the normalized dynamical matrix can be interpreted as
$\frac{1}{N}\rho_i$ where $\rho_i$ is some density matrix. Therefore,
 up to a permutation $P$
the normalized dynamical matrix has the structure
\begin{equation}
\omega=\frac{1}{N}P^{\dagger}D_{\Phi}P=\sum_{i=1}^{N}\frac{1}{N}\rho_i\otimes|i\>\<i|.
\end{equation}
Hence the entropy of the normalized dynamical matrix reads
\begin{equation}
\begin{split}
S(\omega)&= S\left(\frac{D_{\Phi}}{N}\right)=-\sum_i\tr\frac{1}{N}\rho_i\ln{\frac{1}{N}\rho_i}\\
&= \ln{N}+\sum_i\frac{1}{N}S(\rho_i).
\end{split}
\end{equation}
This implies the desired inequality for
the entropy of a map, $S^{\map}(\Phi)$.
\end{proof}
The last string of equations exemplifies the Shannon rule known as \emph{the
grouping principle} \cite{shannon, hall} that the information of expanded
probability distribution should be the sum of a reduced distribution and
weighted entropy of expansions. Notice that the grouping rule does not hold for
all dynamical matrices corresponding to generic quantum operations. As an
example take a maximally entangled state which is a purification of the maximally
mixed state.


\subsection{Super--positive maps and separability of the Jamio{\l}kowski--Choi state}

The aim of this part is to answer the question: How the
separability (entanglement) of the state $\omega_\Phi$ 
can be described in terms of the entropies $S_q^{\map}$ and  $S_q^{\rec}$?
In other words we wish to identify the class of {\sl superpositive maps}
(also called entanglement breaking channels -- see \cite{BZ06}),
for which $\omega_\Phi$ is separable on the plane $(S_q^{\map}, S_q^{\rec})$.
Furthermore, we will determine the region on this plane
 where no such maps can be found. The method to answer
these questions is based on the previously given uncertainty relations and
the realignment separability criteria \cite{chenWu,HorRealign}. These criteria
state that if $\omega_\Phi$ is separable then the sum of all singular values of
the matrix $\frac{1}{N}\Phi$ cannot be greater than $1$, what straightforwardly
implies $\Lambda_\Phi\leq N$.
\begin{figure}[ht]
\centering
\scalebox{.4}{\includegraphics{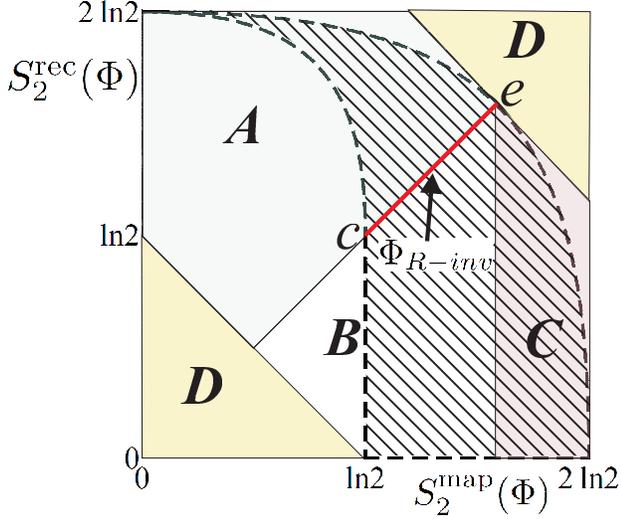}}
\caption{(Color online) One--qubit maps projected onto the entropy plane (stripped set)
with superimposed bounds given in Proposition \ref{separable} concerning the separability of the
dynamical matrix.
Region $A$ contains no superpositive channels,
while region $B$ contains both classes of the maps. Region $C$ (determined by PPT criteria) contains only superpositive channels (all corresponding states are separable).
Lower (\ref{stoch_q})  and upper bounds (\ref{upp}) imply that
there are no one--qubit quantum operations projected into region $D$.
Diagonal of the figure contains the reshuffling--invariant channels,
in particular, the coarse graining channel ($c$)
and the {\sl transition depolarizing channel}
 $\Phi_{1/3}=\frac{1}{3}\idty+\frac{2}{3}\Phi_*$,
located  at the boundary of the set of superpositive maps ($e$).}
\label{fig:entanglement}
\end{figure}
We shall prove the following proposition
\begin{proposition}
\label{separable}
If $\omega_\Phi$ is separable, then:
\begin{enumerate}
\item $S_{q}^{\textrm{map}}\left(\Phi\right)\geq\frac{F_{\textrm{min}}}{4}\ln N$, \;and
\item $S_{q}^{\textrm{rec}}\left(\Phi\right)\leq\frac{1}{1-q}\ln\left(\frac{\left(N+1\right)^{q}+N^{2}-1}{N^{q}\left(N+1\right)^{q}}\right)$, \;and
\item $S_{q}^{\textrm{map}}\geq G_{\textrm{min}}S_{q}^{\textrm{rec}}\left(\Phi\right)$.
\end{enumerate}
\end{proposition}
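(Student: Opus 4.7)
My plan is as follows. The three items share a common hypothesis: the realignment separability criterion guarantees $\Lambda_\Phi\leq N$ whenever $\omega_\Phi$ is separable, so for each part I would combine this single constraint with a previously established bound from the paper.

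For item (1), I would start from the lower bound (\ref{boun4}), which reads $S_q^{\map}(\Phi)\geq F_{\mathrm{min}}\ln(N/\sqrt{\sigma_1\Lambda_\Phi})$. Theorem \ref{th:super-bound} supplies $\sigma_1\leq\sqrt{N}$, and separability supplies $\Lambda_\Phi\leq N$, so their product satisfies $\sigma_1\Lambda_\Phi\leq N^{3/2}$. Substituting yields $N/\sqrt{\sigma_1\Lambda_\Phi}\geq N^{1/4}$ and hence $S_q^{\map}(\Phi)\geq(F_{\mathrm{min}}/4)\ln N$, which is a two-line calculation.

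Item (3) is just as short. Proposition \ref{upperQ2} asserts $S_q^{\map}(\Phi)\geq F_{\mathrm{min}}\ln(N/\Lambda_\Phi)+G_{\mathrm{min}}S_q^{\rec}(\Phi)$, and under separability $\ln(N/\Lambda_\Phi)\geq 0$, so the first term is nonnegative and can simply be dropped, leaving $S_q^{\map}(\Phi)\geq G_{\mathrm{min}}S_q^{\rec}(\Phi)$.

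The real work is in item (2). Theorem \ref{th: recq} gives $S_q^{\rec}(\Phi)\leq\frac{1}{1-q}\ln h(\Lambda_\Phi)$ with
\[
h(\Lambda):=\Lambda^{-q}+\frac{(\Lambda-1)^q}{\Lambda^q(N^2-1)^{q-1}},
\]
and the claimed bound is exactly what one gets by evaluating at $\Lambda_\Phi=N$. The remaining task is to verify that this majorant, as a function of $\Lambda\in[1,N]$, really is maximized at the endpoint $\Lambda=N$ rather than somewhere inside. A direct differentiation gives
\[
h'(\Lambda)=q\,\Lambda^{-q-1}\!\left[\left(\frac{\Lambda-1}{N^2-1}\right)^{\!q-1}-1\right]\leq 0 \quad\text{on } [1,N^2],
\]
so $h$ is monotonically nonincreasing; since $1/(1-q)<0$ for $q>1$, the composition $\frac{1}{1-q}\ln h(\Lambda)$ is nondecreasing in $\Lambda$ and therefore peaks at $\Lambda=N$. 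A short algebraic simplification $h(N)=N^{-q}\bigl[1+(N-1)(N+1)^{1-q}\bigr]=\bigl((N+1)^q+N^2-1\bigr)/\bigl(N^q(N+1)^q\bigr)$ produces the form displayed in the statement, and the Shannon case $q=1$ is recovered by continuity. This monotonicity step is the only non-routine ingredient in the entire proposition; I expect it to be the chief obstacle, whereas items (1) and (3) are essentially bookkeeping exercises that repackage prior inequalities together with $\Lambda_\Phi\leq N$.
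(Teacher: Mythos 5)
Your proposal is correct and follows exactly the paper's route: all three items are obtained by feeding the realignment criterion $\Lambda_\Phi\leq N$ into (\ref{boun4}) (together with $\sigma_1\leq\sqrt{N}$ from Theorem \ref{th:super-bound}), into (\ref{recq}), and into (\ref{upp2}), respectively. Your explicit monotonicity check of the majorant in item (2) is a detail the paper's one-line proof leaves implicit, but it is the same argument, not a different one.
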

\begin{proof}
In order to prove the statements 1--3 we apply
the separability criteria $\Lambda_\Phi\leq N$ directly to the inequalities (\ref{boun4}), (\ref{recq}) and (\ref{upp2}) respectively. In the  case 1 we also include the bound $\sigma_1\leq\sqrt{N}$.
\end{proof}
The above result leads immediately to the separability criteria. If at least
one inequality from Proposition \ref{separable} is violated,
then the state $\omega_\Phi$ is entangled,
 so the map $\Phi$ is not superpositive -- see Fig. \ref{fig:entanglement}.

The last inequality in Proposition \ref{separable} is saturated for the channels for which $S_2^{\map}=S_2^{\rec}$. They are located at
the diagonal of Fig. \ref{fig:entanglement}. This class contains maps
with dynamical matrix symmetric with respect to the reshuffling,
$D=D^R=\Phi$. This condition implies that the superoperator $\Phi$
is hermitian so its spectrum is real.
The following proposition characterizes
the set of one--qubit channels invariant with respect
to reshuffling. 

\begin{proposition}
The following one--qubit bistochastic channels
$\Phi_{R-inv}$ are reshuffling--invariant
\begin{equation}
\Phi_{R-inv}=\Phi_U\Phi_{\eta_1,\eta_2}\Phi_{U^{\dagger}},
\label{forma}
\end{equation}
where
\begin{eqnarray}
& &\Phi_{\eta_1,\eta_2}=\frac{1}{2}\begin{pmatrix}1+\eta_3&0&0&1-\eta_3\\
0&\eta_1+\eta_2&\eta_1-\eta_2&0\\
0&\eta_1-\eta_2&\eta_1+\eta_2&0\\
1-\eta_3&0&0&1-\eta_3\label{eta1eta2}\\
\end{pmatrix}\\
& &\qquad\qquad\qquad{\rm and}\quad 1=\eta_1+\eta_2+\eta_3.\nonumber
\end{eqnarray}
The map $\Phi_U=U\otimes \bar{U}$
describes an arbitrary unitary channel,
as $U$ is a unitary matrix of order two
and $\bar{U}$ denotes its complex conjugation.
\end{proposition}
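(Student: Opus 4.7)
The plan is to split the proof into two independent steps matching the structure $\Phi_{R-inv}=\Phi_U\,\Phi_{\eta_1,\eta_2}\,\Phi_{U^\dagger}$. First, I will verify by a direct entrywise check that the canonical channel $\Phi_{\eta_1,\eta_2}$ in (\ref{eta1eta2}) is itself reshuffling invariant, using the constraint $\eta_1+\eta_2+\eta_3=1$. Second, I will show that unitary conjugation $\Phi\mapsto \Phi_U\,\Phi\,\Phi_{U^\dagger}$ preserves reshuffling invariance. Composing the two conclusions yields the proposition.

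For the first step, I would apply the reshuffling rule $D_{km,ln}=\Phi_{kl,mn}$ from (\ref{reshuffling}) to the $4\times 4$ matrix $\Phi_{\eta_1,\eta_2}$ and read off all sixteen entries of $(\Phi_{\eta_1,\eta_2})^R$. Comparing with the original matrix, the non-trivial identifications amount to the single relation $1-\eta_3=\eta_1+\eta_2$ (together with its cyclic counterparts), i.e., exactly the hypothesis $\eta_1+\eta_2+\eta_3=1$. This establishes $\Phi_{\eta_1,\eta_2}=(\Phi_{\eta_1,\eta_2})^R$.

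For the second step, which carries the main substance of the proof, set $\Psi:=\Phi_U\,\Phi\,\Phi_{U^\dagger}$, with $\Phi_U=U\otimes \bar U$ the superoperator of $\rho\mapsto U\rho U^\dagger$. Using the vectorization identity $\mathrm{vec}(AXB)=(A\otimes B^T)\mathrm{vec}(X)$ one obtains
\begin{equation*}
\Psi=(U\otimes \bar U)\,\Phi\,(U^\dagger\otimes U^T).
\end{equation*}
In parallel, starting from the Jamio\l{}kowski formula $D_\Psi=\sum_{ij}\Psi(|i\rangle\langle j|)\otimes|i\rangle\langle j|$ and changing summation variables via $|a\rangle=U^\dagger|i\rangle$, $|b\rangle=U^\dagger|j\rangle$, a short rearrangement produces the identical transformation law
\begin{equation*}
D_\Psi=(U\otimes \bar U)\,D_\Phi\,(U^\dagger\otimes U^T).
\end{equation*}
Hence, whenever $\Phi=D_\Phi$, one also has $\Psi=D_\Psi$, i.e.\ reshuffling invariance is inherited by $\Psi$.

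The main technical obstacle is the index bookkeeping: the superoperator and the dynamical matrix differ by a permutation of indices, so it is not a priori obvious that their conjugation laws should coincide. The coincidence relies on the two complex-conjugation patterns in the vectorization identity matching precisely those produced by the basis change acting on the maximally entangled resolution in the Jamio\l{}kowski formula --- a feature genuinely tied to \emph{unitary} sandwiching and not reproduced by generic CPTP conjugation. Once these two transformation laws have been aligned, assembling the two steps gives $\Phi_{R-inv}=(\Phi_{R-inv})^R$, as claimed.
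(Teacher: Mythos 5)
Your proof is correct and follows essentially the same route as the paper: first check that $\Phi_{\eta_1,\eta_2}$ is reshuffling--invariant (which reduces to the constraint $\eta_1+\eta_2+\eta_3=1$), then show that conjugation by a unitary channel commutes with reshuffling. The only difference is presentational: the paper obtains the second step from the general entrywise identity $\big[(X^1\otimes X^2)\,Y\,(X^3\otimes X^4)\big]^R=\big(X^1\otimes (X^3)^T\big)\,Y^R\,\big((X^2)^T\otimes X^4\big)$ specialized to $X^1=U$, $X^2=\bar U$, $X^3=U^\dagger$, $X^4=U^T$, whereas you derive the same covariance law for $D_\Psi$ via the Jamio\l{}kowski formula; the two arguments are equivalent.
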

\begin{proof}
To justify the above statement we use the following general property of the reshuffling operation, which can be easily verified by checking the matrix entries of both sides
\begin{eqnarray}
& &\!\!\!\!\!\!\!\!\Big[\left(X^1_{n}\otimes X^2_n\right)\ Y_{n^2}\ \left(X^3_{n}\otimes X^4_{n}\right)\Big]^R=\label{przetas}\\
&=&\left(X^1_{n}\otimes \left(X^3_n\right)^T\right)\ Y_{n^2}^R \ \left(\left(X^{2}\right)^T_{n}\otimes X^4_n\right),\nonumber
\end{eqnarray}
where lower indices denote the dimensionalities of square matrices.
Since (\ref{eta1eta2}) is a reshuffling--invariant matrix, using (\ref{przetas})
we see that (\ref{forma}) is preserved after reshuffling.
\end{proof}

Two extreme examples of the reshuffling--invariant maps
are distinguished in Fig. \ref{fig:entanglement}:
the coarse graining channel ($c$)
 for which $\eta_1=\eta_2=0$ and $\eta_3=1$,
and the transition depolarizing channel ($e$) at the boundary of super--positivity,
$\Phi_{1/3}=\frac{1}{3}\idty+\frac{2}{3}\Phi_*$,
for which  $\eta_1=\eta_2=\eta_3=1/3$.

\section{Concluding remarks}

In this work an entropic trade--off relation analogue of the entropic uncertainty relation characterizing a
given quantum operation (\ref{stoch_q}) was established. We have shown that
for any stochastic quantum map
the sum of the map entropy, characterizing the decoherence introduced to the
system by the measurement process, and the receiver entropy, which describes the
knowledge on the output state without any information on the input,  is bounded
from below.  The more one knows a priori concerning the outcome state, the more
information was exchanged between the principal  subsystem and the environment
due to the quantum operation. A stronger bound (\ref{bistoch_q}) is obtained
for the class of bistochastic maps, for which the maximally mixed state is
preserved. Entanglement properties of a Jamio\l{}kowski--Choi state were
investigated in terms of the entropies $S_q^{\map}$ and  $S_q^{\rec}$.

Dynamical entropic trade--off relations were obtained also 
for the R{\'e}nyi entropies  of an arbitrary order $q$.
From a mathematical perspective this result is
based on inequalities relating the spectrum of a positive hermitian matrix
$X=X^{\dagger}$ and  the singular values of the non--hermitian reshuffled matrix
$X^R$. Related algebraic results were recently obtained in \cite{chikwongli}
and applied to the separability problem. It is tempting to believe that further
algebraic investigations on the spectral properties of a reshuffled matrix will
lead to other results  applicable to physical problems motivated by the quantum
theory.

\begin{acknowledgments}
The authors would like to thank P. Gawron for his help
with the preparation of the figures.
We are grateful to M.~Zwolak and A.E.~Rastegin for helpfull correspondence
and appreciate encouraging discussions with I.~Bia{\l}ynicki--Birula, J.~Korbicz
and  R.~Horodecki. W.R. acknowledges financial support
from the EU STREP Projects HIP, Grant Agreement No. 221889.
Z.P. was supported by MNiSW under the project number IP2011~044271. {\L}.R. acknowledges financial support by MNiSW  research grant, number IP2011~046871, for years 2012-2014.
K.{\.Z}. acknowledges financial support by
the Polish NCN research grant, decision number DEC-2011/02/A/ST2/00305.

\end{acknowledgments}
\appendix

\section{Algebraic lemmas}\label{algebraic lemmas}
In order to prove Lemma \ref{lemma2} we need the following norm inequality:\begin{lemma}\label{lemma1}For
an arbitrary vector $\boldsymbol{x}$ with non-negative coefficients $x_i$,
and for $1\leq q<\infty$ we have
\begin{equation}
\left\Vert \boldsymbol{x}\right\Vert _{q}\leq\left\Vert \boldsymbol{x}\right\Vert _{1}^{1/q}\left\Vert \boldsymbol{x}\right\Vert _{\infty}^{\left(q-1\right)/q}.\label{eq:Lemma1}
\end{equation}
\end{lemma}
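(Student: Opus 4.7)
The plan is to give an elementary pointwise bound, summed coordinate-wise, and then extract a $q$-th root at the end. Concretely, I would exploit the obvious inequality $x_i \leq \|\boldsymbol{x}\|_\infty$ valid for each component, combined with the factorization $x_i^q = x_i \cdot x_i^{q-1}$. Since $q \geq 1$ implies $q-1 \geq 0$, the map $t \mapsto t^{q-1}$ is monotone non-decreasing on $[0,\infty)$, so the inequality $x_i^{q-1}\leq \|\boldsymbol{x}\|_\infty^{q-1}$ is preserved.

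The key steps in order would be:
\begin{enumerate}
\item Write $x_i^q = x_i \, x_i^{q-1}$ and bound the second factor by $\|\boldsymbol{x}\|_\infty^{q-1}$.
\item Sum over $i$ to obtain
\begin{equation*}
\|\boldsymbol{x}\|_q^q \;=\; \sum_i x_i^q \;\leq\; \|\boldsymbol{x}\|_\infty^{q-1} \sum_i x_i \;=\; \|\boldsymbol{x}\|_\infty^{q-1}\|\boldsymbol{x}\|_1.
\end{equation*}
\item Take the $q$-th root of both sides to arrive exactly at \eqref{eq:Lemma1}.
\end{enumerate}

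There is essentially no obstacle: the argument uses only non-negativity of the coordinates and the monotonicity of $t \mapsto t^{q-1}$ for $q\geq 1$. The boundary case $q=1$ is a trivial equality (both sides reduce to $\|\boldsymbol{x}\|_1$), and the case $q\to\infty$ recovers $\|\boldsymbol{x}\|_\infty \leq \|\boldsymbol{x}\|_\infty$. Equality throughout occurs precisely when $\boldsymbol{x}$ has at most one nonzero coordinate, which is consistent with the sharpness one would expect from an interpolation between $\ell^1$ and $\ell^\infty$ (a log-convexity statement of $q \mapsto \log \|\boldsymbol{x}\|_q^q$ at the endpoints). I would not belabor the equality case since it is not needed for the downstream applications in Lemma~\ref{lemma2} and Lemma~\ref{lemma3}.
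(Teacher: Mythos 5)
Your proof is correct and is essentially the same argument as the paper's: the paper splits $x_i^q = x_i^{q/r}x_i^{q/s}$ with $s=q$, $r=q/(q-1)$ (i.e.\ exactly your factorization $x_i^q = x_i\,x_i^{q-1}$) and then applies the H{\"o}lder inequality with exponents $\alpha=\infty$, $\beta=1$, which is precisely your pointwise bound $x_i^{q-1}\leq\left\Vert \boldsymbol{x}\right\Vert _{\infty}^{q-1}$ followed by summation. Your version is just the same computation carried out directly rather than packaged as a degenerate case of H{\"o}lder.
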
\begin{proof} For $1/r+1/s=1$ we shall write $x_{i}^{q}=x_{i}^{q/r}x_{i}^{q/s}$ and next use the H{\"o}lder inequality for $1/\alpha+1/\beta=1$
\begin{equation}
\left\Vert \boldsymbol{x}\right\Vert _q\leq\left(\sum_{i}x_{i}^{q\alpha/r}\right)^{1/\left(q\alpha\right)}\left(\sum_{i}x_{i}^{q\beta/s}\right)^{1/\left(q\beta\right)}.
\end{equation}
When we choose $\alpha=\infty$, $\beta=1$, $s=q$ and $r=q/\left(q-1\right)$
we obtain the desired result (\ref{eq:Lemma1}).\end{proof}

\begin{proof}[Proof of Lemma \ref{lemma2}] Lemma \ref{lemma1} together with the fact that the $q$-norms (\ref{qnorms}) are decreasing functions of the $q$ parameter provide a chain of norm inequalities
\begin{equation}\label{chain}
x_{1}=\left\Vert X\right\Vert _{\infty}\leq\left\Vert X\right\Vert _q\leq\left\Vert X\right\Vert _{1}^{1/q}\left\Vert X\right\Vert _{\infty}^{\left(q-1\right)/q}=\Lambda_{x}^{1/q}x_{1}^{\left(q-1\right)/q}.
\end{equation}
We shall divide (\ref{chain}) by $\left\Vert X\right\Vert _{1}\equiv\Lambda_{x}$
to obtain
\begin{equation}
\frac{x_{1}}{\Lambda_{x}}\leq\frac{\left\Vert X\right\Vert _q}{\left\Vert X\right\Vert _{1}}\leq\left(\frac{x_{1}}{\Lambda_{x}}\right)^{\left(q-1\right)/q}.
\end{equation}
When we take the logarithm of the above inequality and then multiply by $q/\left(1-q\right)$,
we boil down to the result (\ref{boundy1}).
\end{proof}
\begin{proof}[Proof of Lemma \ref{lemma3}]
Reordering operations do not change the matrix entries, thus they also do not change the Hilbert--Schmidt norm $\left\Vert \cdot\right\Vert _\mathrm{HS}\equiv\left\Vert \cdot\right\Vert _2$ which is a sum of squares of moduli of all matrix entries.
This implies the equality$\left\Vert X\right\Vert _{2}=\left\Vert Y_\pi\right\Vert _{2}$.

First we shall prove the right hand side of (\ref{boundyb}). For
$1\leq q\leq2$ we have
\begin{equation}
\left\Vert Y_\pi\right\Vert _q\geq\left\Vert Y_\pi\right\Vert _{2}=\left\Vert X\right\Vert _{2}\geq x_{1},
\end{equation}
what by the same steps as before transforms into
\begin{equation}
 S_{q}\left(Y_\pi\right)\leq\frac{q}{q-1}\ln\left(\frac{\Lambda_{y}}{x_{1}}\right).
\end{equation}
For $q\geq2$ we extend the above inequality using the monotonicity property of the R{\'e}nyi entropy $S_{q}\leq S_{2}$.
Finally, we introduce the function $F_{\textrm{max}}=\max\left(\frac{q}{q-1};2\right)$ to describe properly the transition from $1\leq q\leq2$ to $q\geq2$.

In the case of the lower bound  (\ref{boundyb}) we have for $q\geq2$
\begin{equation}
\left\Vert Y_\pi\right\Vert _q\leq\left\Vert Y_\pi\right\Vert _{2}=\left\Vert X\right\Vert _{2}\leq\sqrt{\left\Vert X\right\Vert _{1}\left\Vert X\right\Vert _{\infty}}=\sqrt{x_{1}\Lambda_{x}},
\end{equation}
what gives
\begin{equation}
S_{q}\left(Y_\pi\right)\geq\frac{q}{q-1}\ln\left(\frac{\Lambda_{y}}{\sqrt{x_{1}\Lambda_{x}}}\right),
\end{equation}
For $1\leq q\leq2$ we have $S_{q}\geq S_{2}$ what extends the above result providing the function $F_{\textrm{min}}=\min\left(\frac{q}{q-1};2\right)$.
\end{proof}

\section{The greatest singular value of $\Phi$}
\label{greatestsingular}
Before the proof of Theorem \ref{th:super-bound} we state the lemma.
\begin{lemma} \label{lemma:herm-enough}
For any matrix $M$ with $\|M\|_{\mathrm{HS}}^2 = \tr M M^{\dagger}  = 1$ there
exist a positive semi--definite matrix $P_M$ with $\|P_M\|_{\mathrm{HS}}^2=1$ such that
\begin{equation}
\|\Phi(M)\|_{\mathrm{HS}}^2 \leq \|\Phi(P_M)\|_{\mathrm{HS}}^2.
\end{equation}
\end{lemma}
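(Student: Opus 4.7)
My plan is to prove the lemma in two reduction steps: first from a general $M$ to a Hermitian matrix, then from a Hermitian matrix to a positive semi-definite one, each time showing the Hilbert-Schmidt norm of the image can only grow (after renormalization).

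For the first reduction I would use the Cartesian decomposition $M=H_1+iH_2$ with $H_1=(M+M^\dagger)/2$ and $H_2=(M-M^\dagger)/(2i)$ both Hermitian. Since $\tr H_1H_2$ is real (so the cross terms cancel), one gets the orthogonal splittings
\begin{equation}
\|M\|_\mathrm{HS}^2=\|H_1\|_\mathrm{HS}^2+\|H_2\|_\mathrm{HS}^2,\quad \|\Phi(M)\|_\mathrm{HS}^2=\|\Phi(H_1)\|_\mathrm{HS}^2+\|\Phi(H_2)\|_\mathrm{HS}^2,
\end{equation}
where the second identity uses the fact that $\Phi$ is Hermiticity-preserving (since it is CP), so $\Phi(H_1)$ and $\Phi(H_2)$ are Hermitian. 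Viewing the right-hand side as a convex combination of the ratios $\|\Phi(H_j)\|_\mathrm{HS}^2/\|H_j\|_\mathrm{HS}^2$ with weights $\|H_j\|_\mathrm{HS}^2$, we can replace $M$ by the normalized Hermitian matrix $H/\|H\|_\mathrm{HS}$, where $H$ is whichever of $H_1,H_2$ achieves the larger ratio; this matrix has unit HS norm and a larger (or equal) image norm.

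For the second reduction, decompose the Hermitian matrix $H$ (with $\|H\|_\mathrm{HS}=1$) as $H=H_+-H_-$ with $H_\pm\ge 0$ supported on orthogonal subspaces, so again $\|H_+\|_\mathrm{HS}^2+\|H_-\|_\mathrm{HS}^2=1$. The key step is that complete positivity of $\Phi$ gives $\Phi(H_\pm)\ge 0$, hence $\tr\Phi(H_+)\Phi(H_-)\ge 0$, and therefore
\begin{equation}
\|\Phi(H)\|_\mathrm{HS}^2=\|\Phi(H_+)\|_\mathrm{HS}^2+\|\Phi(H_-)\|_\mathrm{HS}^2-2\tr\Phi(H_+)\Phi(H_-)\le \|\Phi(H_+)\|_\mathrm{HS}^2+\|\Phi(H_-)\|_\mathrm{HS}^2.
\end{equation}
Repeating the weighted-average-versus-max argument, I pick whichever of $H_+,H_-$ gives the larger ratio $\|\Phi(H_\pm)\|_\mathrm{HS}/\|H_\pm\|_\mathrm{HS}$ and define $P_M$ to be that matrix normalized to unit HS norm. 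By construction $P_M\ge 0$, $\|P_M\|_\mathrm{HS}=1$, and chaining the two reductions gives $\|\Phi(P_M)\|_\mathrm{HS}\ge\|\Phi(M)\|_\mathrm{HS}$.

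The only subtle point is the nonnegativity of the cross term $\tr\Phi(H_+)\Phi(H_-)$; this is not a property of arbitrary Hermiticity-preserving maps but relies on positivity (in fact just positivity, not full CP) of $\Phi$, together with the fact that $H_+$ and $H_-$ are genuinely positive semi-definite so that their images lie in the positive cone, where the Hilbert-Schmidt inner product is nonnegative. Everything else is a clean application of orthogonality and a max-over-weighted-mean inequality, so this is where I expect the conceptual content to sit.
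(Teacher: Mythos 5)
Your argument is correct and follows essentially the same route as the paper: the Cartesian decomposition $M=H_1+iH_2$ together with the convex-combination (weighted-mean versus max) step is identical, and both reductions to positivity rest on the same key fact that $\tr\,\Phi(P)\Phi(Q)\ge 0$ for $P,Q\ge 0$. The only cosmetic difference is in the second step, where the paper takes $P_M=|H_M|=H_++H_-$ directly (which already has unit Hilbert--Schmidt norm, so no renormalization or case selection is needed), while you drop the cross term $\tr\,\Phi(H_+)\Phi(H_-)$ and pick the better of the two normalized Jordan components.
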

\begin{proof}
First we will show, that one can choose hermitian matrix $H_M$, such that
\begin{equation}\label{eqn:hrem-bound}
\|\Phi(M)\|_{\mathrm{HS}}^2 \leq \|\Phi(H_M)\|_{\mathrm{HS}}^2.
\end{equation}
If we consider a decomposition of $M = H + i L$, where $H,L$ are hermitian matrices, we obtain
that
\begin{equation}
1 = \|M\|_{\mathrm{HS}}^2 = \|H\|_{\mathrm{HS}}^2 + \|L\|_{\mathrm{HS}}^2.
\end{equation}
Let ($0\leq p \leq 1$)
\begin{equation}
\|H\|_{\mathrm{HS}}^2 = p, \quad\ \|L\|_{\mathrm{HS}}^2 = 1-p,
\end{equation}
and define normalized hermitian matrices
\begin{equation}
H_0 = H/\sqrt{p}, \ \ L_0= L/\sqrt{1-p}.
\end{equation}
Now we write
\begin{equation}
\begin{split}
\|\Phi(M)\|_{\mathrm{HS}}^2 &= \|\Phi(H) + i \Phi(L)\|_{\mathrm{HS}}^2 \\
&= \|\Phi(H)\|_{\mathrm{HS}}^2 + \|\Phi(L)\|_{\mathrm{HS}}^2 \\
&= p \|\Phi(H_0)\|_{\mathrm{HS}}^2 + (1-p)\|\Phi(L_0)\|_{\mathrm{HS}}^2.
\end{split}
\end{equation}
Since $\|\Phi(M)\|_{\mathrm{HS}}^2$ is a convex combination  of
$\|\Phi(H_0)\|_{\mathrm{HS}}^2$ and $\|\Phi(L_0)\|_{\mathrm{HS}}^2$, therefore
$\|\Phi(H_0)\|_{\mathrm{HS}}^2 \geq \|\Phi(M)\|_{\mathrm{HS}}^2$ or
$\|\Phi(L_0)\|_{\mathrm{HS}}^2 \geq \|\Phi(M)\|_{\mathrm{HS}}^2$  and this
shows, that for any matrix $M$, with $\|M\|_{\mathrm{HS}}^2  = 1$ there exist a
hermitian matrix $H_M$, which satisfies~\eqref{eqn:hrem-bound}.

Now it is easy to notice, that by taking the absolute value of a hermitian
matrix $H_M$ we do not decrease the norm of the channel output, i.e. let
\begin{equation}
H_M = \sum \lambda_i \ketbra{\phi_i}{\phi_i}, \ \ |H_M|=\sum |\lambda_i| \ketbra{\phi_i}{\phi_i}.
\end{equation}
We have
\begin{equation}
\begin{split}
\|\Phi(H_M)\|_{\mathrm{HS}}^2
&= \sum \lambda_i \lambda_j \tr \Phi(\ketbra{\phi_i}{\phi_i})\Phi(\ketbra{\phi_j}{\phi_j}) \\
&\leq \sum |\lambda_i\lambda_j| \tr \Phi(\ketbra{\phi_i}{\phi_i})\Phi(\ketbra{\phi_j}{\phi_j})\\
&=\|\Phi(|H_M|)\|_{\mathrm{HS}}^2.
\end{split}
\end{equation}
\end{proof}

Now we are in position to prove  Theorem \ref{th:super-bound}.

\begin{proof}[Proof of Theorem \ref{th:super-bound}]
The definition of the greatest singular value of the super--operator reads
\begin{equation}
\sigma_1(\Phi) = \max_{\|M\|_{\mathrm{HS}} = 1} \|\Phi(M)\|_{\mathrm{HS}}.
\end{equation}
According to Lemma \ref{lemma:herm-enough} we can restrict the maximization to
positive semi--definite matrices $H$ such that $\|H\|_{\mathrm{HS}} = 1$.
Such matrix can be written as $H = \frac{\rho}{\sqrt{\tr \rho^2}}$ for $\rho \in \mathcal{M}_N$.
\begin{equation}
\begin{split}
\sigma_1(\Phi) &= \max_{\|M\|_{\mathrm{HS}} = 1} \|\Phi(M)\|_{\mathrm{HS}} \\
& = \max_{\rho \in \mathcal{M}_N } \left \|\Phi\left(\frac{\rho}{\sqrt{\tr \rho^2}}\right)\right\|_{\mathrm{HS}} \\
&= \max_{\rho \in \mathcal{M}_N} \sqrt{\frac{ \tr \Phi(\rho)^2}{\tr \rho^2}}.
\end{split}
\end{equation}
This proves the first part of Theorem \ref{th:super-bound}. In order to derive the second part we shall use the Kraus representation:
\begin{equation}
\Phi:\rho\mapsto\Phi\left(\rho\right)=\sum_{i}A_{i}\rho A_{i}^{\dagger},\qquad\sum_{i}A_{i}^{\dagger}A_{i}=\1,
\end{equation}
and write
\begin{equation}
\textrm{Tr}\Phi\left(\rho\right)^{2}=\sum_{i,j}\textrm{Tr}\left(A_{j}^{\dagger}A_{i}\rho A_{i}^{\dagger}A_{j}\rho\right),
\end{equation} where we also took an advantage from the trace invariance under cyclic permutations. Applying the matrix version of the Cauchy--Schwarz inequality $\textrm{Tr}XY\leq\sqrt{\textrm{Tr}X^{\dagger}X}\sqrt{\textrm{Tr}Y^{\dagger}Y}$
we obtain the bound
\begin{equation}
\textrm{Tr}\Phi\left(\rho\right)^{2}\leq\sum_{i,j}\sqrt{\textrm{Tr}\left(A_{j}^{\dagger}A_{i}\rho^{2}A_{i}^{\dagger}A_{j}\right)}\sqrt{\textrm{Tr}\left(A_{i}^{\dagger}A_{j}\rho^{2}A_{j}^{\dagger}A_{i}\right)}.\label{2}
\end{equation}
Obviously both families of matrices $A_{j}^{\dagger}A_{i}\rho^{2}A_{i}^{\dagger}A_{j}$
and $A_{i}^{\dagger}A_{j}\rho^{2}A_{j}^{\dagger}A_{i}$ are positive
semi--definite. We shall apply to (\ref{2}) the usual Cauchy--Schwarz
inequality to find that
\begin{equation}\label{sing1}
\textrm{Tr}\Phi\left(\rho\right)^{2}\leq\textrm{Tr}\left(\sum_{i,j}A_{i}^{\dagger}A_{j}\rho^{2}A_{j}^{\dagger}A_{i}\right).
\end{equation}
Next, we shall rearrange the right hand side of \ref{sing1} to the form
\begin{equation}
\textrm{Tr}\left(\sum_{i,j}A_{j}^{\dagger}A_{i}\rho^{2}A_{i}^{\dagger}A_{j}\right)=N\textrm{Tr\ensuremath{\rho^{2}}}\textrm{Tr}\left(\Phi\left(\frac{1}{N}\1\right)\Phi\left(\tilde{\rho}\right)\right),\label{9}
\end{equation}
where $\mathcal{M}_N\ni\tilde{\rho}=\frac{\rho^{2}}{\textrm{Tr\ensuremath{\rho^{2}}}}$. Finally, we use that expression to bound $\sigma_1$ given by the formula (\ref{singq1}):
\begin{equation}
\sigma_1(\Phi) \leq \sqrt{N}\max_{\tilde{\rho} \in \mathcal{M}_N} \sqrt{\textrm{Tr}\left(\Phi\left(\frac{1}{N}\1\right)\Phi\left(\tilde{\rho}\right)\right)}.
\end{equation}
The term $\textrm{Tr}\left(\Phi\left(\frac{1}{N}\1\right)
\Phi\left(\tilde{\rho}\right)\right)$ is bounded by the greatest eigenvalue of the
matrix $\Phi\left(\frac{1}{N}\1\right)$, which implies the desired result.



\end{proof}

\section{Estimating channel entropy}

\begin{figure}[ht]
\centering
\includegraphics{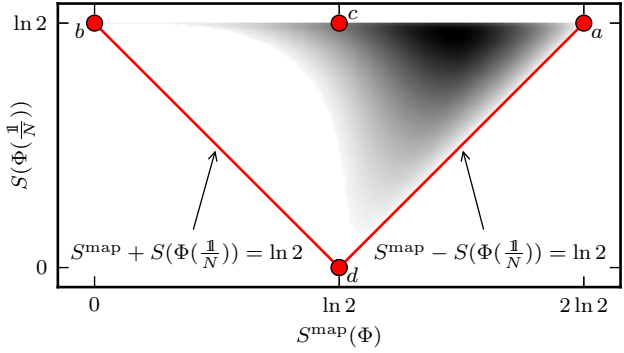}
\caption{(Color online) The region of allowed values of von Neumann entropies
$(S^{\map}(\Phi),S(\Phi(\frac{\idty}{N})))$ for one--qubit quantum channels
$\Phi$. Extremal lines denote the bounds proven in Proposition~\ref{prop102}.
The four distinguished points represent: $a$ -- completely depolarizing channel,
$b$ -- identity channel,  $c$ -- coarse graining channel and $d$ -- channels
describing spontaneous emission as shown in Fig~\ref{fig:sketch}.  The shading
denotes an estimation of the probability density of
$(S^{\map}(\Phi),S(\Phi(\frac{\idty}{N})))$, when $\Phi$ is chosen
randomly~\cite{frob}. The lower bound on the sum
$S^{\map}(\Phi)+S(\Phi(\frac{\idty}{N}))$ is in general not saturated. }
\label{estiment}
\end{figure}

The following inequality allows us to estimate the entropy of a channel. A similar estimation was recently formulated in \cite{zhang12}.

\begin{proposition}
\label{prop102}
For any quantum channel $\Phi$ acting on $\c M_N$ the following inequality holds
\begin{equation}
\ln{N}-S\left(\Phi\left(\frac{1}{N}\idty\right)\right)\leq S^{\map}(\Phi)
\leq \ln{N}+S\left(\Phi\left(\frac{1}{N}\idty\right)\right),
\label{upper}
\end{equation}
where $S$ denotes the von Neumann entropy.
Moreover, the right inequality is
satisfied for the R{\'e}nyi entropy
$S_q(\rho)=\frac{1}{1-q}\ln\tr{\rho^q}$ of an arbitrary order $q$.
\end{proposition}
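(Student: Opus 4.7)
The plan is to view $\omega_\Phi = (\Phi \otimes \idty)|\phi_+^{AB}\rangle\langle\phi_+^{AB}|$ as a bipartite state on $A \otimes B$ (both factors of dimension $N$) and exploit its two marginals. A direct computation gives $\omega_\Phi^A = \tr_B \omega_\Phi = \Phi(\frac{1}{N}\idty)$, while trace preservation of $\Phi$ forces $\omega_\Phi^B = \tr_A \omega_\Phi = \frac{1}{N}\idty$ (equivalently: in the block decomposition $\omega_\Phi = \sum_{ij} A_{ij} \otimes |i\>\<j|_B$, trace preservation yields $\tr A_{ij} = \delta_{ij}/N$).

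For the von Neumann bounds both inequalities are immediate. Subadditivity gives $S(\omega_\Phi) \leq S(\omega_\Phi^A) + S(\omega_\Phi^B) = S(\Phi(\frac{1}{N}\idty)) + \ln N$, which is the right inequality of (\ref{upper}). The Araki--Lieb triangle inequality gives $S(\omega_\Phi) \geq |S(\omega_\Phi^A) - S(\omega_\Phi^B)| = \ln N - S(\Phi(\frac{1}{N}\idty))$, using that $S(\Phi(\frac{1}{N}\idty)) \leq \ln N$; this is the left inequality of (\ref{upper}).

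The extension of the upper bound to arbitrary Rényi order $q$ is the delicate step, since $S_q$ is not subadditive in general. I would instead use the block decomposition above and combine two observations. First, the dephasing channel $T_B$ on $B$ is mixed-unitary, so $\lambda(T_B \omega_\Phi) \prec \lambda(\omega_\Phi)$; since $\sum_k \lambda_k^q$ is Schur-convex for $q \geq 1$ and Schur-concave for $q \in [0,1]$, we obtain $\tr(T_B \omega_\Phi)^q \leq \tr \omega_\Phi^q$ for $q \geq 1$ (and the reverse for $q<1$). Because $T_B(\omega_\Phi) = \sum_i A_{ii} \otimes |i\>\<i|_B$ is block-diagonal, $\tr(T_B \omega_\Phi)^q = \sum_i \tr A_{ii}^q$. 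Second, each $A_{ii}$ is positive with $\tr A_{ii} = 1/N$, so $\tilde A_i := N A_{ii}$ is a density matrix and $\omega_\Phi^A = \frac{1}{N}\sum_i \tilde A_i$ is a convex combination. Convexity (concavity) of $X \mapsto \tr X^q$ for $q \geq 1$ ($q \leq 1$) then yields $\tr(\omega_\Phi^A)^q \leq \frac{1}{N}\sum_i \tr \tilde A_i^q = N^{q-1} \sum_i \tr A_{ii}^q$ for $q \geq 1$, and the reverse for $q<1$. Chaining the two estimates gives, in both regimes, $\tr(\omega_\Phi^A)^q \leq N^{q-1}\tr\omega_\Phi^q$ for $q \geq 1$ and the reverse for $q<1$. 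Applying $\log$ and multiplying by $\frac{1}{1-q}$ (whose sign precisely matches the direction of the inequality) produces the uniform bound $S_q(\omega_\Phi) \leq \ln N + S_q(\Phi(\frac{1}{N}\idty))$, with $q=1$ recovered by continuity (or by the subadditivity argument above).

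The main obstacle is the Rényi step, where the convexity of $\tr X^q$ and the Schur-convexity of $\sum \lambda_k^q$ both flip at $q = 1$; the argument works precisely because these two flips occur together, so the composed inequality retains the correct sense once one multiplies by $\frac{1}{1-q}$. The specific block structure forced by $\omega_\Phi^B = \frac{1}{N}\idty$ is essential -- without it the pinching step would not isolate a convex combination of density matrices with weights exactly $1/N$, and the factor $N^{q-1}$ giving the clean additive $\ln N$ would not emerge.
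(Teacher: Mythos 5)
Your proof is correct, and its skeleton is the same as the paper's: both arguments regard $\omega_\Phi$ as a bipartite state whose marginals are $\tr_B\,\omega_\Phi=\Phi(\tfrac{1}{N}\idty)$ and, by trace preservation, $\tr_A\,\omega_\Phi=\tfrac{1}{N}\idty$, and both obtain the von Neumann upper bound from subadditivity and the lower bound from the Araki--Lieb triangle inequality. Where you genuinely diverge is the R\'enyi extension. The paper disposes of it in one line by citing the weak subadditivity of R\'enyi entropies \cite{Dam02Renyi}, which in the present situation gives $S_q(\omega_\Phi)\leq S_q(\tr_B\,\omega_\Phi)+\ln N$. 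You instead prove the required inequality from scratch: dephasing the $B$ factor is mixed--unitary, so the pinched state's spectrum is majorized by that of $\omega_\Phi$, and Schur convexity/concavity of $\sum_k\lambda_k^q$ gives $\tr(T_B\omega_\Phi)^q\leq\tr\omega_\Phi^q$ for $q\geq1$ (reversed for $q<1$); then, since trace preservation forces $\tr A_{ii}=1/N$, the marginal $\omega_\Phi^A=\tfrac{1}{N}\sum_i(NA_{ii})$ is a uniform mixture of density matrices, and convexity/concavity of $X\mapsto\tr X^q$ yields $\tr(\omega_\Phi^A)^q\leq N^{q-1}\sum_i\tr A_{ii}^q$ for $q\geq1$ (reversed for $q<1$). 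Your observation that the two sign flips at $q=1$ occur in tandem, so that multiplying by $1/(1-q)$ restores a single uniform inequality, is exactly right, and all the individual facts you invoke (majorization under unital channels, convexity of $\tr X^q$ for $q\geq1$ and concavity for $q\leq1$) are standard. What the paper's citation buys is brevity; what your route buys is a transparent, self-contained proof of precisely the special case of weak subadditivity that is used, which also makes visible why the additive constant is exactly $\ln N$: it comes from the uniform weights $1/N$ in the block decomposition, i.e.\ from the marginal $\tr_A\,\omega_\Phi$ being maximally mixed.
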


\begin{proof}
For any quantum operation $\Phi$ the corresponding dynamical matrix $D_{\Phi}$
obeys the following relations \cite{BZ06}.
\begin{eqnarray}
 \tr_1 \frac1N D_{\Phi} &=&  \frac{1}{N}\idty , \\
  \tr_2 \frac1N D_{\Phi} &=& \Phi\left(\frac{1}{N}\idty \right).
\end{eqnarray}
Thus in the case of the von Neumann entropies the upper bound follows from
subadditivity, while the lower bound is a consequence of Araki--Lieb triangle
inequality \cite{Araki_Lieb}.
\end{proof}

In the case of the R{\'e}nyi entropies the upper bound follows directly from the weak
subadditivity~\cite{Dam02Renyi}. Although the lower bound (\ref{upper}) for the
von Neumann entropy of a map can not be directly extended for R{\'e}nyi entropies,
we provide another generalized bound, which holds for any $q\ge 0$,
\begin{equation}
\ln{N} - \ln \mathrm{rank}\left(\Phi \left(\frac{1}{N}\idty\right)\right)\leq
S^{\map}_q(\Phi).
\end{equation}
This lower bound for the generalized entropy of a map $S^{\map}_q$ follows also
from the weak subadditivity ~\cite{Dam02Renyi}.

Notice that in the special case of complete contraction
$\Phi_{\xi}:\rho\rightarrow\xi$ when the dynamical matrix has a form
$D_{\Phi_{\xi}}=\xi\otimes\idty$ the right hand side of inequality (\ref{upper})
is saturated. The bounds established by Proposition \ref{prop102} are
illustrated in Figure \ref{estiment}.


\end{document}